\newcommand\R{{\mathbb{R}}}
\newcommand\N{\mathbb{N}}
\newcommand\eps{\varepsilon}
\newcommand{\wt}{\widetilde}
\newcommand{\widebar}[1]{\mbox{\kern1.5pt\hbox{\vbox{\hrule height 0.6pt \kern0.35ex
        \hbox{\kern-0.15em \ensuremath{#1 }\kern0.0em}}}}\kern-0.1pt}
\newcommand{\E}{\mathbb{E}}
\newcommand{\dint}{{\rm d}}
\newcommand{\abs}[1]{\left\vert #1 \right\vert}
\newcommand{\norm}[1]{\left\Vert #1 \right\Vert}
\DeclareSymbolFont{fouriersymbols}{FMS}{futm}{m}{n}
\DeclareSymbolFont{fourierlargesymbols}{FMX}{futm}{m}{n}
\DeclareMathDelimiter{\VERT}{\mathord}{fouriersymbols}{152}{fourierlargesymbols}{147}
\numberwithin{equation}{section}
\theoremstyle{plain}
\newtheorem{thm}{Theorem}[section]
\newtheorem{lem}{Lemma}[section]
\newtheorem{cor}{Corollary}[section]
\newtheorem{prop}{Proposition}[section]
\newtheorem{ass}{Assumption}[section]
\newtheorem{alg}{Algorithm}[section]
\newtheorem{ex}{Example}[section]
\theoremstyle{remark}
\newtheorem{rem}{Remark}[section]
\begin{document}

\begin{frontmatter}
\title{Perturbation theory for Markov chains via Wasserstein distance}
\runtitle{Perturbation theory for Markov chains via Wasserstein distance}

\begin{aug}
\author{\fnms{Daniel} \snm{Rudolf}\thanksref{a}\ead[label=e1]{daniel.rudolf@uni-goettingen.de}}
\and
\author{\fnms{Nikolaus} \snm{Schweizer}\thanksref{b} \ead[label=e2]{n.f.f.schweizer@uvt.nl}}

\address[a]{Institut f\"ur Mathematische Stochastik, Universit\"at G\"ottingen, 
Goldschmidtstra\ss e 7, 37077
G\"ottingen, Germany.
\printead{e1}}

\address[b]{Department of Econometrics \& OR, Tilburg University, 
P.O.box 90153, 5000 LE Tilburg, The Netherlands.
\printead{e2}}

\runauthor{D. Rudolf and N. Schweizer}

\affiliation{Universit\"at G\"ottingen and University of Tilburg}

\end{aug}

\begin{abstract}
Perturbation theory for Markov chains addresses the question of
how small differences in the transition probabilities
of Markov chains are reflected in differences between their distributions.
We prove powerful and flexible bounds on the distance of 
the $n$th step distributions
of two Markov chains when one of them satisfies
a Wasserstein ergodicity condition.
Our work is motivated by the recent interest in 
approximate
Markov chain Monte Carlo (MCMC) methods
in the analysis of 
big data sets. 
By using an approach based on Lyapunov functions, we 
provide estimates for geometrically ergodic Markov chains 
under
weak assumptions. 
In an autoregressive model, our bounds cannot be improved in general. 
We illustrate our theory by showing quantitative estimates
for approximate 
versions of two prominent MCMC algorithms, 
the Metropolis-Hastings
and stochastic Langevin algorithms.
\end{abstract}

\begin{keyword}
\kwd{perturbations}
\kwd{Markov chains}
\kwd{Wasserstein distance}
\kwd{MCMC}
\kwd{big data}
\end{keyword}

\end{frontmatter}

\section{Introduction}

Markov chain Monte Carlo (MCMC) algorithms 
are one of the key tools in computational statistics.
They are used for the approximation of expectations 
with respect to probability measures given by unnormalized
densities. For almost all classical MCMC methods
it is essential to evaluate
the target density. In many cases, 
this requirement is not an issue, but there are also 
important applications where it is a problem. 
This includes applications where the density is not available in closed form, see 
\cite{MaPuRoRy12},
or where an exact evaluation is computationally too demanding, see \cite{AFEB14}.
Problems of this kind lead to the approximation of Markov chains and
to the question of
how small differences in the transitions
of two Markov chains affect the differences between their distributions.

In Bayesian inference when \emph{big data} sets are 
involved an exact evaluation of the target density
is typically  
very expensive. 
For instance, in each step of a Metropolis-Hastings algorithm
the likelihood of a proposed 
state
must be computed. 
Every observation in the underlying data set contributes 
to the likelihood and must be taken into account in the calculation. 
This may result in evaluating
several terabytes of data in each step of the algorithm. 
These are the reasons for the recent   
interest in numerically cheaper approximations of 
classical MCMC methods, see 
\cite{BDH14,BaDoHo15,KCW14,SWM12,WeTe11}. 
A reduction of the computational costs can, e.g., be achieved 
by relying on a moderately sized random subsample of the 
data in each step of the algorithm. The function value of the 
target density is thus replaced by an approximation. Naturally, subsampling 
and alternative attempts at ``cutting the Metropolis-Hastings budget'' 
\cite{KCW14} induce additional biases. 
These biases can lead to dramatic changes in 
the properties of the algorithms as discussed in \cite{Be15}.

We thus need a better theoretical understanding of the behavior of such \emph{approximate MCMC} methods.
Indeed, a number of recent papers prove estimates of these biases,
see \cite{AFEB14,BDH14,JoMaMuDu15,LeDoLa14,MeLeRo15,PS14}. 
A key tool
in these papers are perturbation bounds for Markov chains. 
One such result for uniformly ergodic Markov chains due to Mitrophanov \cite{Mi05} 
is used  
in \cite{AFEB14}. 
A similar perturbation estimate implicitly appears in \cite{BDH14}.	  
The focus on uniformly ergodic Markov chains 
is rather restrictive, especially for high-dimensional, non-compact state spaces such as $\mathbb{R}^m$. 
Working
with Wasserstein distances has recently 
turned out to be a fruitful alternative in several contributions on high-dimensional 
MCMC algorithms, see \cite{DuMo15,Eb14,Gi04,HaStVo14,MaSe10}. 

We provide perturbation bounds based 
on Wasserstein distances, which 
lead to flexible quantitative estimates of the biases of approximate 
MCMC methods. 
Our first main result is the Wasserstein perturbation 
bound of Theorem \ref{thm: drift}. Under a Wasserstein ergodicity
assumption, explained in Section~\ref{secPrelim}, it provides 
an upper bound on the distance of the $n$th step distribution between 
an ideal and an approximating Markov chain in terms of
the difference between their one-step transition probabilities. 
The result is well-suited for applications 
on a non-compact state space, since the difference of the one-step transition probabilities 
is measured 
by a weighted supremum with respect to a suitable Lyapunov function.  
For an autoregressive model, 
we show in Section \ref{subsec: auto_regr_proc} 
that the resulting perturbation bound cannot be improved in general.
As a consequence of the Wasserstein approach we also obtain 
perturbation estimates for geometrically ergodic Markov chains. 
We first adapt our Wasserstein perturbation bound to this setting. 
Then, as a second main result, Theorem \ref{thm geom3}, 
we prove a refined 
estimate
for geometrically ergodic chains where the perturbation 
is measured by a weighted total variation distance.
Our perturbation bounds,
and earlier ones in \cite{Mi03, Mi05}, establish a direct 
connection between an exponential convergence property for Markov chains and their
robustness to perturbations. In particular, fast convergence
to stationarity implies insensitivity to perturbations in the transition probabilities.
Geometric ergodicity has been studied extensively in the 
MCMC literature. Thus, our estimates can be used in combination with many existing convergence 
results for MCMC algorithms. In Section \ref{sec: appl}, 
we illustrate the applicability of both theorems
by generalizing recent 
findings on 
approximate Metropolis-Hastings algorithms from \cite{BDH14} 
and on noisy Langevin algorithms
for Gibbs random fields
from \cite{AFEB14}.

\subsection{Related literature}

We refer to \cite{Ka86, KaGo13} for an overview of the 
classical literature on perturbation theory for Markov chains.
However, as Stuart and Shardlow observed in \cite{ShSt00}, the classical  assumptions on the 
perturbation  
might be too restrictive for many interesting applications. 
As a consequence,
 they develop a perturbation theory for geometrically ergodic Markov chains \cite{ShSt00} 
which requires to control perturbations  
of iterated transition kernels in a weaker sense.
In our bounds for geometrically ergodic Markov chains,
we have similar flexibility in the perturbation due to the Lyapunov-type stability condition, 
and require only a control on the errors of one-step transition kernels.

Mitrophanov, in \cite{Mi05}, considers uniformly ergodic
Markov chains 
and provides 
the best estimates in those settings. 
In the geometrically ergodic case, there are further 
related results, see \cite{FHL13} 
and the references therein. 
Compared to \cite{FHL13}, our 
focus is on non-asymptotic 
estimates
with explicit constants,
while their main focus is on qualitative results such as 
inheritance of geometric ergodicity by the perturbation. 
Earlier related results on perturbations induced by floating-point 
roundoff errors are shown in \cite{BrRoRo01,RoRoSch98}.

Finally, let us point out that our paper 
is complementary to 
the work of 
Pillai and Smith \cite{PS14} 
who also present Wasserstein perturbation bounds for Markov chains.
When moving beyond the uniformly ergodic Markov chain case, 
an important challenge is to handle the issue that 
in many applications suprema of relevant quantities 
over the whole state space are infinite. 
The authors of \cite{PS14} guarantee finiteness 
of supremum norms by restricting attention 
to subsets of the state space. 
Their bounds thus involve exit probabilities from these subsets. 
Our approach circumvents these issues by relying on 
Lyapunov-type stability conditions for the approximate algorithm.

\section{Wasserstein ergodicity}\label{secPrelim}
Let $G$ be a Polish space 
and $\mathcal{B}(G)$ be the corresponding Borel $\sigma$-algebra.
Let $d$ be a metric, possibly different from the one which makes the space Polish, 
which is assumed to be lower semi-continuous with respect to the product topology of $G$.
Let $\mathcal{P}$ be the set of all Borel probability measures on $(G,\mathcal{B}(G))$.
Then, we define 
the Wasserstein distance of $\nu,\mu\in \mathcal{P}$ by
\[
 W(\nu,\mu) = \inf_{\xi \in M(\nu,\mu)} \int_G \int_G d(x,y)\, \dint \xi(x,y),
\]
where $M(\nu,\mu)$ is the set of all couplings of $\nu$ and $\mu$,
that is, all probability measures $\xi$ on $G\times G$ with marginals $\nu$ and $\mu$.
Indeed, on $\mathcal{P}$ the Wasserstein distance satisfies 
the properties of a metric but is not necessarily finite, see \cite[Chapter~6]{Vi09}.	
For a measurable function $f\colon G \to \R$ we define
\[
 \norm{f}_{\rm Lip} = \sup_{x,y \in G, x\neq y} \frac{\abs{f(x)-f(y)}}{d(x,y)},
\]
which leads to the well-known duality formula
\begin{equation}\label{KRduality}
 W(\nu,\mu) 
 = \sup_{\norm{f}_{\rm Lip} \leq 1} \abs{\int_G f(x) (\dint \nu(x)- \dint \mu(x))}.
\end{equation}
For details we refer to \cite[Chapter~1.2]{Vi03}. 
By $\delta_x$ we denote the probability measure concentrated at $x$. Hence
$W(\delta_x,\delta_y) = d(x,y)$ is finite for $x,y\in G$.

Let $P$ be a transition kernel on $(G,\mathcal{B}(G))$ 
which defines a linear operator $P \colon \mathcal{P}\to \mathcal{P}$ given by
\[
 \mu P(A) = \int_G P(x,A)\,\dint \mu(x),\quad \mu\in \mathcal{P},\,A\in \mathcal{B}(G).
\]
With this notation we have $\delta_x P(A) = P(x,A)$.
Further, for a measurable function $f\colon G \to \R$
and $\mu\in \mathcal{P}$ we have
\[
 \int_G f(x)\, \dint (\mu P)(x) = \int_G P f(x) \, \dint \mu(x),
\]
with $P f(x) = \int_G f(y) P(x,\dint y)$ whenever one of the integrals exist,
see for example \cite[Lemma~3.6]{Ru12}.
Now, by 
\[
  \tau(P) := \sup_{x,y\in G, x\neq y} \frac{W(\delta_x P,\delta_y P)}{d(x,y)}
\]
we define the \emph{generalized ergodicity coefficient} 
of 
transition kernel $P$. 
This coefficient can be understood as a 
generalized Dobrushin ergodicity coefficient, see \cite{Do56a,Do56b}. 
Dobrushin himself called $\tau(P)$ the Kantorovich norm of $P$, 
see \cite[formula (14.34)]{Do96}. Finally, $\tau(P)$ also provides a lower bound of the coarse Ricci curvature of $P$ introduced
in \cite{Ol09}. 

Two essential properties of the ergodicity coefficient are submultiplicativity
and contractivity, see \cite[Proposition~14.3 and Proposition~14.4]{Do96}.
\begin{prop} \label{prop: contr_subm}
 For two transition kernels $P$ and $\wt P$ on $(G,\mathcal{B}(G))$ and $\mu,\nu\in \mathcal{P}$, we have
 \begin{align*}
  \tau(P \wt P) & \leq \tau(P) \tau(\wt P) \qquad \text{(Submultiplicativity)}, \\ 
\text{ and }\quad    W(\nu P,\mu P)
  & \leq \tau(P)\, W(\nu,\mu) \qquad \text{(Contractivity)}.
 \end{align*}
\end{prop}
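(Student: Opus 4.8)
The plan is to prove contractivity first and then deduce submultiplicativity from it together with the defining formula for $\tau$. For contractivity, the key observation is that $\tau(P)$ controls how much $P$ can expand distances between Dirac measures, and one wants to leverage this via a coupling argument. Given $\mu,\nu\in\mathcal{P}$ and any coupling $\xi\in M(\nu,\mu)$, I would disintegrate: for $\xi$-almost every pair $(x,y)$, pick (or use a measurable selection of) a near-optimal coupling $\xi_{x,y}$ of $\delta_x P$ and $\delta_y P$, so that $\int\int d(u,v)\,\dint\xi_{x,y}(u,v)$ is within $\eps$ of $W(\delta_xP,\delta_yP)\le \tau(P)\,d(x,y)$. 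Averaging these against $\xi$ produces a coupling $\eta$ of $\nu P$ and $\mu P$ (the marginals check out because $\int \delta_xP\,\dint\nu(x)=\nu P$ and similarly for $\mu$). Then
\begin{equation*}
 W(\nu P,\mu P)\le \int_{G\times G}\!\!\int_{G\times G} d(u,v)\,\dint\eta \le \int_{G\times G}\big(\tau(P)\,d(x,y)+\eps\big)\,\dint\xi(x,y) = \tau(P)\int_{G\times G} d(x,y)\,\dint\xi + \eps.
\end{equation*}
Taking the infimum over $\xi\in M(\nu,\mu)$ and letting $\eps\downarrow 0$ yields $W(\nu P,\mu P)\le\tau(P)\,W(\nu,\mu)$.

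For submultiplicativity, I would apply contractivity of $P$ to the pair of measures $\delta_x\wt P$ and $\delta_y\wt P$: writing $\delta_x(P\wt P)=(\delta_x P)\wt P$ and using associativity of the kernel action, we get $W(\delta_x P\wt P,\delta_y P\wt P)=W((\delta_x P)\wt P,(\delta_y P)\wt P)\le \tau(\wt P)\,W(\delta_x P,\delta_y P)\le \tau(\wt P)\tau(P)\,d(x,y)$. Dividing by $d(x,y)$ and taking the supremum over $x\ne y$ gives $\tau(P\wt P)\le\tau(P)\tau(\wt P)$. Note the order here — contractivity is applied with the \emph{second} kernel $\wt P$ — but the roles are symmetric in the product bound, so it does not matter.

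The main obstacle is the measurability of the family $(\xi_{x,y})$ in the disintegration step: one must know that near-optimal (or optimal) couplings can be chosen in a jointly measurable way so that the average $\eta(A\times B)=\int \xi_{x,y}(A\times B)\,\dint\xi(x,y)$ is a well-defined probability measure with the correct marginals. On a Polish space this follows from standard measurable-selection results for the optimal transport problem (the set of optimal couplings has a measurable graph, and one can invoke, e.g., a Kuratowski--Ryll-Nardzewski type selection theorem), and the lower semicontinuity of $d$ guarantees that $W$ is attained and behaves well under these limiting operations; alternatively one can bypass optimal couplings entirely and work with the duality formula \eqref{KRduality}, testing against $1$-Lipschitz $f$, which turns both inequalities into elementary manipulations of $\int Pf\,\dint\mu$ and avoids selection theorems altogether. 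Since Proposition \ref{prop: contr_subm} is quoted from \cite[Propositions~14.3 and~14.4]{Do96}, I would present the duality-based argument as the clean route and relegate the coupling picture to a remark.
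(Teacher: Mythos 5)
Your argument is correct. There is nothing in the paper to compare it against: the paper does not prove Proposition \ref{prop: contr_subm} but quotes it from Dobrushin \cite[Propositions~14.3 and 14.4]{Do96}. Your route is the standard one, and both halves are sound: averaging (near-)optimal couplings of $\delta_x P$ and $\delta_y P$ against a coupling of $\nu$ and $\mu$ gives contractivity, and applying contractivity of $\wt P$ to the pair $(\delta_x P,\delta_y P)$, together with $\delta_x(P\wt P)=(\delta_x P)\wt P$, gives submultiplicativity. The measurable-selection point you flag is the only real technicality, and it is handled as you say: since $d$ is lower semi-continuous on the Polish product space, optimal couplings exist and can be chosen measurably in $(x,y)$ (e.g.\ \cite[Corollary~5.22]{Vi09}), so one can even dispense with the $\eps$; infinite values of $\tau(P)$ or $W(\nu,\mu)$ make the claims trivial. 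The duality alternative you sketch also closes within the paper's framework: the easy direction of \eqref{KRduality} gives $\abs{Pf(x)-Pf(y)}\le \tau(P)\,d(x,y)$ for $\norm{f}_{\rm Lip}\le 1$, and then \eqref{KRduality} applied to $\nu P,\mu P$ together with $\int_G f\,\dint(\mu P)=\int_G Pf\,\dint\mu$ yields contractivity with no selection theorem at all, which is indeed the cleaner write-up.
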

As an immediate consequence of this contractivity, we obtain the following corollary.
 \begin{cor}\label{cor: Wass_erg}  
  Let $P$ be a transition kernel with stationary distribution $\pi$, i.e. $\pi P=\pi$,
  and assume
  for some (and hence any) $x_0\in G$ it holds that $\int_G d(x_0,x)\, \dint \pi(x)<\infty$.
  Then
  \begin{equation}  \label{eq: Wass_erg}
  \sup_{x\in G} \frac{W(\delta_x P,\pi)}{W(\delta_x,\pi)}    \leq \tau(P).
  \end{equation}
 \end{cor}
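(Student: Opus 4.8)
The plan is to read off the claim directly from the contractivity part of Proposition~\ref{prop: contr_subm}, with the moment hypothesis entering only to guarantee that the quantities appearing in \eqref{eq: Wass_erg} are finite and that the displayed ratio is well defined.

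First I would note that for any $x\in G$ the only coupling of $\delta_x$ and $\pi$ is the product measure $\delta_x\otimes\pi$ (a coupling with first marginal $\delta_x$ must be concentrated on $\{x\}\times G$), so that
\[
 W(\delta_x,\pi) = \int_G d(x,y)\,\dint\pi(y).
\]
By the triangle inequality for $d$ together with the hypothesis $\int_G d(x_0,x)\,\dint\pi(x)<\infty$, this yields $W(\delta_x,\pi)\le d(x,x_0)+\int_G d(x_0,y)\,\dint\pi(y)<\infty$ for every $x\in G$. The same estimate, applied with two different base points, also shows that the finiteness of the integral is independent of the choice of $x_0$, which justifies the ``and hence any'' in the statement.

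Next I would apply contractivity with $\nu=\delta_x$ and $\mu=\pi$: Proposition~\ref{prop: contr_subm} gives $W(\delta_x P,\pi P)\le \tau(P)\,W(\delta_x,\pi)$, and since $\pi P=\pi$ this reads $W(\delta_x P,\pi)\le \tau(P)\,W(\delta_x,\pi)$. For $x$ with $W(\delta_x,\pi)>0$ I divide by this finite positive number and take the supremum over such $x$ to obtain \eqref{eq: Wass_erg}. The only remaining case, $W(\delta_x,\pi)=0$, forces $\delta_x=\pi$ since $W$ is a metric on $\mathcal{P}$, and then both $W(\delta_x P,\pi)$ and $W(\delta_x,\pi)$ vanish, so it does not affect the supremum.

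There is no genuine obstacle here; the one point requiring a little care — and the reason the moment condition on $\pi$ is imposed — is the finiteness of $W(\delta_x,\pi)$, without which the ratio in \eqref{eq: Wass_erg} would be of the indeterminate form $\infty/\infty$ and the statement would be vacuous.
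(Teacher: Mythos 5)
Your proposal is correct and follows essentially the same route as the paper: finiteness of $W(\delta_x,\pi)$ from the moment condition, then contractivity from Proposition~\ref{prop: contr_subm} combined with $\pi P=\pi$. The extra details you supply (the explicit formula $W(\delta_x,\pi)=\int_G d(x,y)\,\dint\pi(y)$ via the unique coupling, and the degenerate case $W(\delta_x,\pi)=0$) are fine elaborations of the same argument.
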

  \begin{proof}
  Because of the assumption $\int_G d(x_0,x)\, \dint \pi(x)<\infty$ we have
  that $W(\delta_x,\pi)$ is finite for any $x\in G$.
  Thus, the assertion follows by Proposition~\ref{prop: contr_subm} 
  and stationarity of $\pi$.
\end{proof}
 \begin{rem}
 For some special cases one also has an estimate of the form \eqref{eq: Wass_erg}
 in the other direction. 
 To this end, consider the trivial metric 
 $d(x,y)=2 \cdot \mathbf{1}_{x\not = y}$ with 
 indicator function
 \[
  \mathbf{1}_{x\not = y} = \begin{cases}
                            1 & x\not = y\\
                            0 & x=y.
                           \end{cases}
 \]
 Further, 
 let
  \[
   \norm{q}_\text{tv}:= \sup_{\norm{f}_\infty \leq 1} \abs{\int_G f(y)\, \dint q(y)}  = 2\sup_{A\in \mathcal{B}(G)} \abs{q(A)}
  \]
   be the total variation norm of a signed measure $q$ on $G$. 
   In this setting $W(\mu,\nu) = \norm{\mu-\nu}_{\text{tv}}$.
  For $x,y\in G$ with $x\not= y$ we have 
  $\norm{\delta_x - \delta_y}_{\text{tv}}=d(x,y)=2$
  so that
    \begin{equation} \label{eq: tv_dobrushin_erg_coef}
       \tau_1(P) = \frac{1}{2} \sup_{x,y \in G, x\not =y} \norm{\delta_x P - \delta_y P}_{\text{tv}}.
  \end{equation}
   The ``$1$'' in the subscript of $\tau_1(P)$
   indicates that we use the trivial metric.
   By applying the triangle inequality of the total variation
   norm we obtain $\tau_1(P) \leq \sup_{x\in G} \norm{\delta_x P - \pi}_{\text{tv}}$.
   If additionally $\pi$ is atom-free, i.e.,  $\pi(\{y\})=0$ for all $y\in G$, 
   we have
   $
    \norm{\delta_y - \pi}_{\text{tv}} = 2.
   $
   Then, the previous consideration and \eqref{eq: Wass_erg}
   lead to
   \[
    \frac{1}{2} \sup_{x\in G} \norm{\delta_x P - \pi}_{\text{tv}} \leq \tau_1(P) \leq \sup_{x\in G} \norm{\delta_x P - \pi}_{\text{tv}}.
   \]
   For the moment, let us
   assume that $P$ is uniformly ergodic, that is, there exist numbers $\rho \in [0,1)$ and $C\in (0,\infty)$
   such that
   \[
     \sup_{x\in G} \norm{\delta_x P^n - \pi}_{\text{tv}} \leq C \rho^n, \quad n\in \mathbb{N}.
   \]
 An immediate consequence of the uniform ergodicity
 is that
 $\tau_1(P^n)\leq C \rho^n$.
 \end{rem}
Also note that if 
there is an $n_0\in \mathbb{N}$ for which $\tau(P^{n_0})<1$ 
 we have by 
 the submultiplicativity, see Proposition~\ref{prop: contr_subm}, that $\tau(P^{n})$ converges exponentially to zero.
 This motivates to impose
the following assumption which contains the idea 
 to measure convergence of $\delta_x P^n$ to $\pi$ in terms of $\tau(P^n)$.
 \begin{ass}[Wasserstein ergodicity]
 \label{ass: wass_contr}
For the transition kernel $P$
 there exist
 numbers $\rho \in [0,1)$ and $C\in (0,\infty)$ such that
 \begin{equation} \label{eq: ricci_curv}
  \tau(P^n)=  \sup_{x,y\in G, x\not = y} \frac{W(P^n(x,\cdot),P^n(y,\cdot))}{d(x,y)} \leq C \rho^n, \quad n\in \mathbb{N}.
 \end{equation}
  \end{ass}

For any probability measure $p_0\in \mathcal{P}$, a transition kernel $P$ 
with stationary distribution $\pi$ 
and $p_n = p_0 P^n$ 
we have under the Wasserstein ergodicity condition that 
\begin{equation*}
\label{eq: pibound}
 W(p_n,\pi) 
 \leq C \rho^n W(p_0,\pi). 
\end{equation*}

\section{Perturbation bounds}\label{secBounds}

By $\mathbb{N}_0 = \{0,1,2,\dots\}$ we denote 
the non-negative integers and assume that
all random variables are defined on a common probability 
space $(\Omega,\mathcal{F},\mathbb{P})$
mapping to a Polish space $G$ equipped with a lower semi-continuous metric $d$.
Let the sequence of random variables $(X_n)_{n\in \N_0}$ 
be a Markov chain with transition kernel $P$ and initial distribution $p_0$, i.e.,
we have almost surely
\[
 \mathbb{P}(X_n\in A\mid X_0,\dots,X_{n-1}) = \mathbb{P}(X_n\in A\mid X_{n-1}) = P(X_{n-1},A), \qquad n\in \mathbb{N}
\]
and $p_0 (A)= \mathbb{P}(X_0\in A)$ for any measurable set $A\subseteq G$.
Assume that $(\wt X_n)_{n\in \mathbb{N}_0}$ 
is another Markov chain with transition kernel $\wt P$ 
and initial distribution $\wt p_0$.
We denote by $p_n$ the distribution of $X_n$ and by $\wt p_n$ 
the distribution of $\wt X_n$. 
Throughout the paper, $(X_n)_{n\in \mathbb{N}}$ is considered to be the ideal, 
unperturbed Markov chain 
we would like to simulate while $(\wt X_n)_{n\in \mathbb{N}_0}$ 
is the perturbed Markov chain that we actually implement.

\subsection{Wasserstein perturbation bound}\label{sec: basic bounds}

Similar as in  
\cite[Theorem~3.1]{Mi05}, 
we show quantitative bounds on the difference of $p_n$ and $\wt p_n$, but use the
Wasserstein distance instead of total variation.
Besides Assumption~\ref{ass: wass_contr}, 
the bounds depend on the difference of 
the initial distributions and on a suitably 
weighted one-step difference between $P$ and $\wt{P}$.

\begin{thm}[Wasserstein perturbation bound]
\label{thm: drift}
  Let Assumption~\ref{ass: wass_contr} be satisfied
  with the numbers $C\in (0,\infty)$ 
  and $\rho\in [0,1)$, i.e., $\tau(P^n) \leq C \rho^n$. 
  Assume that there are numbers $\delta\in (0,1)$ 
 	and $L\in(0,\infty)$ and a 
  	measurable 
 	Lyapunov function $\wt V:G \rightarrow [1,\infty)$ of $\wt P$ such that
 \begin{equation} \label{eq: drift_cond}
    (\wt P \wt V)(x) \leq \delta \wt V(x)+ L.
  \end{equation}
  Let 
  \[
   \gamma = \sup_{x\in G} \frac{W(\delta_x P,\delta_x \widetilde P)}{\wt V(x)}
  \qquad  \mbox{and} \qquad 
   \kappa=\max\left\{\wt p_0 (\wt V), \frac{L}{1-\delta} \right\}
  \]
  with ${\wt p_0}(\wt V) = \int_G \wt V(x)\, \dint {\wt p_0}(x)$.
  Then
  \begin{equation} \label{eq: wass_est_d}
    W(p_n,\widetilde p_n) \leq C\left( \rho^n W(p_0,\widetilde p_0) + (1- \rho^n)
     \frac{\gamma \kappa}{1-\rho} \right).
\end{equation}
\end{thm}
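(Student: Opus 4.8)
The plan is to adapt the telescoping argument of Mitrophanov \cite{Mi05} to the Wasserstein setting, using the contractivity and submultiplicativity of the ergodicity coefficient from Proposition~\ref{prop: contr_subm}. Write $\wt p_j = \wt p_0 \wt P^j$ for the law of $\wt X_j$. We may assume that $\gamma$, $\kappa$ and $W(p_0,\wt p_0)$ are finite, since otherwise the right-hand side of \eqref{eq: wass_est_d} is $+\infty$ and nothing is to be shown, and we take $C\ge 1$ (which is the relevant regime, since $\tau(P^0)=1$).

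The first step is a moment bound that is uniform in $j$: $\wt p_j(\wt V)\le \kappa$ for all $j\in\N_0$. Integrating the drift condition \eqref{eq: drift_cond} against $\wt p_{j-1}$ and using $\int_G f\,\dint(\mu\wt P)=\int_G \wt Pf\,\dint\mu$ gives $\wt p_j(\wt V)=\wt p_{j-1}(\wt P\wt V)\le \delta\,\wt p_{j-1}(\wt V)+L$; iterating this recursion yields
\[
 \wt p_j(\wt V)\;\le\;\delta^j\,\wt p_0(\wt V)+L\sum_{i=0}^{j-1}\delta^i\;=\;\delta^j\,\wt p_0(\wt V)+(1-\delta^j)\,\frac{L}{1-\delta},
\]
which is a convex combination of $\wt p_0(\wt V)$ and $L/(1-\delta)$, hence at most $\kappa$.

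The second step is the one-step perturbation estimate $W(\mu P,\mu\wt P)\le \gamma\,\mu(\wt V)$, valid for every $\mu\in\mathcal{P}$. For any $f$ with $\norm{f}_{\rm Lip}\le 1$, the duality formula \eqref{KRduality} and $\int_G f\,\dint(\mu P)=\int_G Pf\,\dint\mu$ give
\[
 \abs{\int_G f\,\dint(\mu P-\mu\wt P)}=\abs{\int_G\bigl(Pf(x)-\wt Pf(x)\bigr)\,\dint\mu(x)}\le\int_G\abs{Pf(x)-\wt Pf(x)}\,\dint\mu(x),
\]
while pointwise $\abs{Pf(x)-\wt Pf(x)}=\abs{\int_G f\,\dint(\delta_x P-\delta_x\wt P)}\le W(\delta_x P,\delta_x\wt P)\le\gamma\,\wt V(x)$; taking the supremum over such $f$ proves the estimate. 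Combined with the first step, $W(\wt p_{k-1}P,\wt p_{k-1}\wt P)\le\gamma\,\wt p_{k-1}(\wt V)\le\gamma\kappa$ for every $k\ge 1$.

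The third step assembles these pieces via the telescoping decomposition
\[
 p_n-\wt p_n=\bigl(p_0 P^n-\wt p_0 P^n\bigr)+\sum_{k=1}^{n}\bigl(\wt p_{k-1}P\,P^{n-k}-\wt p_{k-1}\wt P\,P^{n-k}\bigr),
\]
in which the sum telescopes because $\wt p_{k-1}\wt P=\wt p_k$. Applying the triangle inequality for $W$ and then contractivity (Proposition~\ref{prop: contr_subm}) to each of the resulting differences, and using Assumption~\ref{ass: wass_contr} with the convention $\tau(P^0)=1\le C$ together with the two bounds from Steps~1 and~2, we obtain
\[
 W(p_n,\wt p_n)\le\tau(P^n)\,W(p_0,\wt p_0)+\sum_{k=1}^{n}\tau(P^{n-k})\,W(\wt p_{k-1}P,\wt p_{k-1}\wt P)\le C\rho^n W(p_0,\wt p_0)+C\gamma\kappa\sum_{k=1}^{n}\rho^{n-k}.
\]
Since $\sum_{k=1}^{n}\rho^{n-k}=(1-\rho^n)/(1-\rho)$, this is precisely \eqref{eq: wass_est_d}. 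I expect the only genuinely delicate point to be the uniformity in $k$ of the moment bound $\wt p_{k-1}(\wt V)\le\kappa$: this is exactly what keeps the geometric series bounded, and it explains why $\kappa$ is defined as the maximum of $\wt p_0(\wt V)$ and $L/(1-\delta)$ rather than just $\wt p_0(\wt V)$. Everything else is bookkeeping with the triangle inequality and Proposition~\ref{prop: contr_subm}.
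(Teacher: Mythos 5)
Your proof is correct and follows essentially the same route as the paper: your telescoping decomposition is exactly the identity \eqref{eq: repr}, and the moment bound $\wt p_j(\wt V)\le\kappa$, the one-step estimate $W(\wt p_{k-1}P,\wt p_{k-1}\wt P)\le\gamma\kappa$, and the use of contractivity plus the geometric series all mirror the paper's argument. The only cosmetic differences are that you justify $W(\mu P,\mu\wt P)\le\gamma\,\mu(\wt V)$ via the duality formula \eqref{KRduality} where the paper invokes convexity of the Wasserstein distance directly, and that you explicitly flag the harmless $\tau(P^0)=1\le C$ convention which the paper uses implicitly.
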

\begin{proof}
  By induction one can show that
  \begin{equation} \label{eq: repr}
     \widetilde p_n - p_n 
     = (\widetilde p_0 - p_0 )P^n 
     + \sum_{i=0}^{n-1} \widetilde p_i (\widetilde P-P)P^{n-i-1},\quad n\in \mathbb{N}.
  \end{equation}
  We have
\begin{align*}
W(\wt p_i P, \wt p_i \wt P)
 \leq \int_G W(\delta_x P, \delta_x \wt P)\, \dint \wt p_i(x)
   & \leq \gamma
     \int_G  \wt V(x)\, \dint \wt p_i(x).
  \end{align*}
Moreover, for $i \geq 0$ we have
\begin{align*}
 \int_G \wt V(x) \,\dint \wt p_i(x) 
 = \int_G \wt P^i \wt V(x)\, \dint \wt p_0(x)
 \leq \delta^i \wt p_0(\wt V) + \frac{L (1-\delta^i)}{(1-\delta)} 
 \leq \max\left\{ \wt p_0(\wt V),\frac{L}{1-\delta}\right\}
 \end{align*}
so that we obtain $W(\wt p_i P, \wt p_i \wt P) \leq \gamma \kappa$.
  By this fact 
  we have
  \begin{equation} \label{eq: est_w}
  W(\wt p_i \wt P P^{n-i-1}, \wt p_i P  P^{n-i-1}) 
      \leq \gamma \kappa \cdot \tau(P^{n-i-1}).
  \end{equation}
   Then, by \eqref{eq: repr}, \eqref{eq: est_w} and 
  the triangle inequality of the Wasserstein distance we have
  \begin{align*}
   W(p_n, \wt p_n) 
    & \leq 
    W(p_0 P^n,\wt p_0 P^n)
     + \sum_{i=0}^{n-1} 
     W(\wt p_i\wt P P^{n-i-1},\wt p_i P P^{n-i-1})\\
   & \leq W(p_0,\wt p_0 ) \tau(P^n) + \gamma \kappa \sum_{i=0}^{n-1} \tau(P^i).
  \end{align*}
Finally, by \eqref{eq: ricci_curv}
we obtain
$
 \sum_{i=0}^{n-1} \tau(P^i) \leq \frac{C(1-\rho^n)}{1-\rho},
$
which allows us to complete the proof.
 \end{proof}

\begin{rem}\label{rem: triv_drift}
The parameter $\kappa$ is an upper bound on $\wt{p}_i(\wt V)$. 
It can be interpreted as a measure for the stability of the perturbed Markov chain.  
The parameter $\gamma$
quantifies 
with a weighted supremum norm the one-step difference between 
$P$ and $\wt P$. 
The use of the Lyapunov function 
 increases the flexibility of the resulting estimate, since
 larger values of $\wt V$ compensate larger 
values of the Wasserstein distance between the kernels.
Notice that the existence of a Lyapunov function satisfying 
\eqref{eq: drift_cond} is 
weaker than assuming $\wt V$-uniform ergodicity of $\wt P$ since 
it is not associated with a small set condition. In particular, 
the condition is satisfied for any $\wt P$ with the trivial choice 
$\wt V (x) =1$ for all $x\in G$, see Corollary \ref{thm: was_mith}. 
As we will see in Section \ref{sec: appl}, 
allowing for non-trivial choices of $\wt V$ considerably increases the applicability of our results.
\end{rem}

If $\wt P$ has a stationary distribution, say $\wt \pi \in \mathcal{P}$, as a consequence
of the previous theorem,
we obtain bounds on the difference between $\pi$ and $\wt \pi$.
\begin{cor}  \label{cor: Wass_pi_pi_tilde}
 Let the assumptions of Theorem~\ref{thm: was_mith} be satisfied. 
 Assume that $\wt P$ has a stationary distribution $\wt \pi\in \mathcal{P}$ and let
 $W(\pi,\wt \pi)$ be finite. Then
 \begin{equation} \label{eq: dist_stat_dist2}
      W(\pi,\wt \pi) \leq \frac{\gamma C}{1-\rho} \cdot \frac{L}{1-\delta}.
 \end{equation}
\end{cor}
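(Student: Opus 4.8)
The plan is to feed the two stationary distributions into Theorem~\ref{thm: drift} as initializations and then pass to the limit $n\to\infty$. Concretely, I would apply Theorem~\ref{thm: drift} with $p_0=\pi$ and $\wt p_0=\wt\pi$. Since $\pi P=\pi$ and $\wt\pi\wt P=\wt\pi$, the iterates are constant, $p_n=\pi$ and $\wt p_n=\wt\pi$ for all $n$, so the bound \eqref{eq: wass_est_d} turns into
\[
  W(\pi,\wt\pi)\ \le\ C\Big(\rho^n\,W(\pi,\wt\pi)+(1-\rho^n)\,\frac{\gamma\kappa}{1-\rho}\Big),\qquad n\in\N .
\]
Because $W(\pi,\wt\pi)<\infty$ by assumption and $\rho\in[0,1)$, letting $n\to\infty$ makes the first term in the bracket vanish and the factor $1-\rho^n$ tend to $1$, so $W(\pi,\wt\pi)\le \tfrac{C\gamma\kappa}{1-\rho}$.

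The only thing left is to evaluate $\kappa=\max\{\wt\pi(\wt V),\,L/(1-\delta)\}$ for this choice of $\wt p_0$; I claim $\wt\pi(\wt V)\le L/(1-\delta)$, so that $\kappa=L/(1-\delta)$ and the displayed bound is exactly \eqref{eq: dist_stat_dist2}. The formal reason is the usual one: iterating the drift inequality \eqref{eq: drift_cond} gives $\wt P^{i}\wt V\le \delta^{i}\wt V+L\tfrac{1-\delta^{i}}{1-\delta}$, and integrating against the invariant measure $\wt\pi$, using $\wt\pi(\wt P^{i}\wt V)=\wt\pi(\wt V)$, leads to $\wt\pi(\wt V)\le \delta^{i}\wt\pi(\wt V)+L\tfrac{1-\delta^{i}}{1-\delta}$, whence $\wt\pi(\wt V)\le L/(1-\delta)$ once $i\to\infty$.

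The subtle point — and the step I would be most careful about — is that this computation silently assumes $\wt\pi(\wt V)<\infty$; if $\wt\pi(\wt V)=\infty$ it collapses to the trivial $\infty\le\infty$. To justify finiteness I would first work with the bounded truncations $\wt V_k:=\min\{\wt V,k\}$, $k\in\N$. Monotonicity of $\wt P^{i}$ together with $\wt P^{i}\wt V_k\le\wt P^{i}\wt V\le \delta^{i}\wt V+L/(1-\delta)$ and $\wt P^{i}\wt V_k\le k$ gives $\wt P^{i}\wt V_k\le\min\{k,\ \delta^{i}\wt V+L/(1-\delta)\}$ pointwise. Integrating against $\wt\pi$, using invariance for the bounded function $\wt V_k$, and letting $i\to\infty$ (dominated convergence with dominating constant $k$, valid since $\delta^{i}\wt V\to 0$ pointwise as $\wt V$ is finite-valued) yields $\wt\pi(\wt V_k)\le L/(1-\delta)$ uniformly in $k$; monotone convergence then upgrades this to $\wt\pi(\wt V)\le L/(1-\delta)$. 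Inserting $\kappa=L/(1-\delta)$ into the first paragraph finishes the proof. I do not expect a genuine obstacle here — the only care needed is this handling of possible infinities in the auxiliary moment bound.
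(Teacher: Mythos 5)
Your proof is correct and follows essentially the same route as the paper's: apply Theorem~\ref{thm: drift} with $p_0=\pi$, $\wt p_0=\wt \pi$, use stationarity so that $p_n=\pi$, $\wt p_n=\wt\pi$, let $n\to\infty$ (which is exactly where the assumed finiteness of $W(\pi,\wt\pi)$ is needed), and then reduce $\kappa$ to $L/(1-\delta)$ via the moment bound $\wt\pi(\wt V)\leq L/(1-\delta)$. The only difference is that the paper obtains this last bound by citing \cite[Proposition~4.24]{Ha06}, whereas you prove it directly with the truncation argument $\wt V_k=\min\{\wt V,k\}$; your handling of the possible infiniteness of $\wt\pi(\wt V)$ is sound and makes the step self-contained.
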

\begin{proof}
 By Theorem~\ref{thm geom3} we obtain with
 $p_0=\pi$, $\wt p_0 = \wt \pi$, the stationarity of the distributions $\pi$, $\wt \pi$ 
 and by letting $n \to \infty$ 
 that
 \[
  W(\pi,\wt \pi) \leq \frac{C \gamma \kappa}{1-\rho}.
 \]
 By the Lyapunov condition and \cite[Proposition~4.24]{Ha06}, it holds that
 \[
  \wt \pi(\wt V) = \int_G \wt V(x) \dint \wt \pi(x) \leq \frac{L}{1-\delta}
 \]
 which leads to $\kappa \leq L/(1-\delta)$ and finishes the proof.
\end{proof}
\begin{rem}
 It may seem artificial to assume $W(\pi,\wt \pi) < \infty$ but this is needed for the limit argument in the proof.
 This condition is often satisfied a priori. For example, it holds if
 the metric is bounded, i.e., $\sup_{x,y\in G} d(x,y)$ is finite, or, more generally, if 
 the distributions $\pi$ and $\wt \pi$ possess a first moment in the sense that
 there exist $x_0,\wt x_0 \in G$ such that
 \[
  \int_G d(x_0,x)\, \dint \pi(x) <\infty, \qquad \int_G d(\wt x_0,x )\,\dint \wt \pi(x) <\infty. 
 \]
\end{rem}
As pointed out in Remark \ref{rem: triv_drift}, we do not need to impose condition \eqref{eq: drift_cond}
to obtain a non-trivial perturbation bound:
 \begin{cor}  \label{thm: was_mith}
  Assume that Assumption~\ref{ass: wass_contr}
  holds with the numbers $C\in (0,\infty)$ 
  and $\rho\in [0,1)$, i.e., $\tau(P^n) \leq C \rho^n$, and
  let 
  \[
   \gamma := \sup_{x\in G} W(\delta_x P,\delta_x \widetilde P).
  \]
  Then
  \begin{equation} \label{eq: wass_est}
    W(p_n,\widetilde p_n) \leq C\left( \rho^n W(p_0,\widetilde p_0) + (1- \rho^n)
     \frac{\gamma}{1-\rho} \right).
  \end{equation}
 \end{cor}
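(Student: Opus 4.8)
The plan is to obtain this as a direct specialization of Theorem~\ref{thm: drift} to the trivial Lyapunov function. I would set $\widetilde V(x) = 1$ for all $x \in G$. Then $\widetilde V$ maps into $[1,\infty)$ and, since $(\widetilde P \widetilde V)(x) = 1$, the drift condition \eqref{eq: drift_cond} holds with, say, $\delta = 1/2$ and $L = 1/2$ (indeed with any $\delta \in (0,1)$ and $L = 1-\delta$). With this choice one has $L/(1-\delta) = 1$ and $\widetilde p_0(\widetilde V) = 1$, hence $\kappa = \max\{1,1\} = 1$, while the quantity called $\gamma$ in Theorem~\ref{thm: drift} becomes $\sup_{x\in G} W(\delta_x P, \delta_x \widetilde P)/1$, which is exactly the $\gamma$ defined in the statement of the corollary. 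Substituting $\kappa = 1$ into \eqref{eq: wass_est_d} then yields \eqref{eq: wass_est}.

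An alternative, essentially equivalent route is to rerun the argument of Theorem~\ref{thm: drift} with the Lyapunov bookkeeping removed. The telescoping identity \eqref{eq: repr} still holds, and for each $i$ the convexity of the Wasserstein distance in its arguments (glue, along $\widetilde p_i(\dint x)$, optimal couplings of $\delta_x P$ and $\delta_x \widetilde P$) gives $W(\widetilde p_i P, \widetilde p_i \widetilde P) \leq \int_G W(\delta_x P, \delta_x \widetilde P)\, \dint \widetilde p_i(x) \leq \gamma$, with no need to control $\widetilde p_i(\widetilde V)$. Contractivity and submultiplicativity of the generalized ergodicity coefficient (Proposition~\ref{prop: contr_subm}) then give $W(\widetilde p_i \widetilde P P^{n-i-1}, \widetilde p_i P P^{n-i-1}) \leq \gamma\, \tau(P^{n-i-1})$, and combining this with the triangle inequality and Assumption~\ref{ass: wass_contr} produces $W(p_n, \widetilde p_n) \leq W(p_0, \widetilde p_0)\,\tau(P^n) + \gamma \sum_{i=0}^{n-1}\tau(P^i) \leq C\rho^n W(p_0,\widetilde p_0) + \gamma \,\frac{C(1-\rho^n)}{1-\rho}$, which is the claimed bound.

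I do not anticipate any real obstacle: the statement is a routine corollary, and the only ``choice'' involved is the trivial Lyapunov function, after which every constant simplifies. The one point deserving an explicit sentence is the admissibility check, namely that $\widetilde V \equiv 1$ indeed takes values in $[1,\infty)$ and satisfies the affine drift inequality \eqref{eq: drift_cond} for some valid pair $(\delta, L)$ — which it does trivially — so that Theorem~\ref{thm: drift} applies verbatim.
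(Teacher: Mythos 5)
Your first route is exactly the paper's proof: the corollary is obtained by applying Theorem~\ref{thm: drift} with $\wt V \equiv 1$ and $L = 1-\delta$, which gives $\kappa = 1$ and the stated $\gamma$, and your admissibility check and constant bookkeeping are correct. The alternative direct rerun of the argument is fine too, but it is the same underlying estimate, so there is nothing further to add.
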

 \begin{proof}
  The statement follows by Theorem~\ref{thm: drift} with $\wt V(x)=1$
  and $L=1-\delta$.
 \end{proof}
 
\begin{rem}
  For the trivial metric $d(x,y)=2 \cdot \mathbf{1}_{x\not = y}$ 
  the last corollary states essentially the result of \cite[Theorem~3.1]{Mi05},
  where instead of the general Wasserstein distance 
  the total variation distance is used. There, the bound's 
  dependence on $C$ and $\rho$ can be further improved 
  by using the a priori bound $\tau_1(P^n)\leq 1$ in addition 
  to uniform ergodicity. 
    For another metric $d$ such an a priori bound is in general not available. 
\end{rem}

\begin{rem}
Table~\ref{table: comp} provides 
a detailed comparison between our 
Theorem~\ref{thm: drift} and the related 
Wasserstein perturbation result of Pillai and 
Smith, \cite[Lemma~3.3]{PS14}. An important 
ingredient in their estimate is a set $\widehat G\subseteq G$
which can be interpreted as the part of $G$ where 
both Markov chains remain with high probability. 
When a good uniform upper bound on  
$W(\delta_x P,\delta_x \wt P)$ for all $x \in G$ is available, 
we can choose $\widehat{G}=G$ in \cite[Lemma~3.3]{PS14} and $\wt{V}(x) = 1$ 
in Theorem~\ref{thm: drift}. In that case, 
both results essentially simplify to Corollary \ref{thm: was_mith}. 
The results become entirely different when such a 
bound is not available or too rough. 
In our estimate, one then needs a non-trivial Lyapunov 
function for $\wt P$ and a uniform upper bound 
on $W(\delta_x P,\delta_x \wt P)/\wt{V}(x)$. 
To apply their estimate, one needs a uniform 
bound on $W(\delta_x P,\delta_x \wt P)$ for all  $x \in \widehat{G}$. 
In addition, a bound on $\pi(G\setminus \widehat{G})$,
Lyapunov functions and estimates of the exit probabilities 
from $\widehat{G}$ of both Markov chains need to be available.
Finally, while \cite[Lemma~3.3]{PS14}
requires slightly more regularity on the Lyapunov function, 
contractivity of the unperturbed 
transition kernel
$P$ (with $C=1$) is not 
needed on the whole state space but only on $\widehat{G}$.
 
\end{rem}

\begin{table}[htb]
\label{table: comp}
\centering
\caption{Comparison of the Wasserstein perturbation 
bound of \cite[Lemma~3.3]{PS14} and Theorem~3.1. 
Here  $\rho,\delta \in [0,1)$, $L,c_p,C,D \in (0,\infty)$, 
$V\colon G\to [0,\infty)$, $\wt V\colon G\to [1,\infty)$ and
$
 E(x) = \int_G d(x,y)\dint \pi(y).
$
}

\begin{tabular}{|c|c|c|}
\hline
         \begin{tabular}[c]{@{}c@{}} 
	\\ \\ \\
              \end{tabular}           & 
                 \begin{tabular}[c]{@{}c@{}} 
		Assumptions of \\
                \cite[Lemma 3.3]{PS14}
              \end{tabular} 
                   &  
                      \begin{tabular}[c]{@{}c@{}} 
		Assumptions of \\
                Theorem~\ref{thm: drift}
              \end{tabular} 
                  
\\ 
\hline
\hline
   \begin{tabular}[c]{@{}c@{}}
   \\
		Convergence \\
                property\\
                \\
              \end{tabular}            &  
${ \displaystyle
                            \exists \widehat G \subseteq G
                            \quad \text{s.t.} \quad
                            \sup_{x,y\in \widehat G} 
                            \frac{W(\delta_x P,\delta_y P)}{d(x,y)}\leq \rho
                            }$

              &                 $\tau(P^n)\leq C\rho^n$
              \\
              \hline
   \begin{tabular}[c]{@{}c@{}} 
   \\  Lyapunov function   \\
   \\
              \end{tabular} 
&   
   \begin{tabular}[c]{@{}l@{}} 
		$PV(x)\leq \delta V(x) + L$ \\
                $\wt PV(x)\leq \delta V(x) + L$ 
              \end{tabular}
          &     $\wt P\wt V(x)\leq \delta \wt V(x) + L$         \\
          \hline
             \begin{tabular}[c]{@{}c@{}} 
   \\  Drift regularity    \\
   \\ \\
              \end{tabular} 
          & 
   \begin{tabular}[c]{@{}l@{}} 
		$\mathbb{E}[V(X_{n+1}\mid X_n=x, X_{n+1}\not \in \widehat G)] \leq C$ \\
$\mathbb{E}[V(\wt X_{n+1}\mid \wt X_n=x, \wt X_{n+1}\not \in \widehat G)] \leq C$\\
$\exists p\in \widehat G \quad \text{s.t.} \quad d(x,p)\leq V(x)+c_p$  
\end{tabular}&         ---     \\
\hline
             \begin{tabular}[c]{@{}c@{}} 
   \\  Perturbation error    \\
   \\ 
              \end{tabular} 
        & ${
\displaystyle 
\widehat\gamma:=\sup_{x\in \widehat G} W(\delta_x P,\delta_x \wt P)}$  
			   & ${
 			   \displaystyle 
			   \gamma
			   :=\sup_{x\in G} \frac{W(\delta_x P,\delta_x \wt P)}{\wt V(x)}}$
\\
\hline

\begin{tabular}[c]{@{}c@{}} 
\\
Regularity of $\pi$ \\
\\
\end{tabular}
&  
   \begin{tabular}[c]{@{}c@{}} 
		$\int_{G\setminus \widehat G} V(x) \dint \pi(x)\leq D$ \\
$\pi(G\setminus \widehat G )$  small
\end{tabular}
& ---             \\ \hline \hline
   \begin{tabular}[c]{@{}c@{}} 
   \\
		Conclusion: \\
		Upper bound of\\ 
		$W(\delta_x \wt P^n,\pi)$\\
		\\
\end{tabular}                &  
\begin{tabular}[c]{@{}c@{}} 
		$\rho^n E(x) +\frac{\widehat \gamma }{1-\rho} + $ \\
 $\left(\frac{2L}{1-\delta} + \delta^n (V(x)+D) + c_p\right) \pi(G\setminus \widehat G)$ +\\ 
   $ 2(1-\mathbb{P}[\{X_j\}_{j=1}^{ n-1}\cup \{\wt X_j\}_{j=1}^{n-1}\subseteq \widehat G])
   (C+\frac{L}{1-\delta}+c_p)$ 
\end{tabular}  
&       
\begin{tabular}[c]{@{}c@{}} 
		$C \rho^n E(x) +$ \\
 $
     \frac{C\gamma }{1-\rho} \max\{ \wt V(x),\frac{L}{1-\delta} \}$ 
\end{tabular}    
         \\ \hline
\end{tabular}
\vspace*{2ex}

\end{table}

\subsection{Perturbation bounds for geometrically ergodic Markov chains}\label{sec: pert geom}

In this section, we derive 
general perturbation bounds 
for
geometrically ergodic Markov chains. 
First, we recall some results from \cite{HaMa11}, \cite{MaZhZh13} 
and \cite{RoRo97} which are helpful 
to apply our Wasserstein perturbation bounds in the geometrically ergodic case. 
Then we present the new estimates: 
\begin{itemize}
 \item Corollary~\ref{cor geom1} is an application of Theorem \ref{thm: drift} 
      with Wasserstein distances replaced by $V$-norms of differences between measures.
 \item In Corollary~\ref{cor geom2}, we show that having a Lyapunov function $V$ for $P$ is sufficient 
	for our bounds if the transition kernels $P$ and $\wt P$ are sufficiently close (in a suitable sense).
  \item In Theorem \ref{thm geom3}, we provide a quantitative perturbation bound which still applies 
  if we can only control the total variation distance between $P(x,\cdot)$ and $\wt P(x,\cdot)$. 
To measure the perturbation in such a weak sense is new for geometrically ergodic Markov chains. 
\end{itemize}

A transition kernel $P$ 
with stationary distribution $\pi$ 
 is called geometrically ergodic if there is a constant $\rho\in [0,1)$
  and a measurable function $C\colon G \to (0,\infty)$ 
  such that for $\pi$-a.e. $x\in G$ we have
  \[
   \norm{P^n(x,\cdot)-\pi}_{\text{tv}} 
   \leq C(x) \rho^n.
  \]
  For $\phi$-irreducible and aperiodic Markov chains, it is well known
  that
  geometric ergodicity is equivalent to $V$-uniform ergodicity, see \cite[Proposition~2.1]{RoRo97}.
  Namely, if $P$ is geometrically ergodic, then
  there exists a $\pi$-a.e. finite measurable function $V \colon G \to [1,\infty]$
  with finite moments with respect to $\pi$ and there are constants $\rho\in [0,1)$
  and $C\in(0,\infty)$ such that
  \[
   \norm{P^n(x,\cdot)-\pi}_V 
   := \sup_{\abs{f}\leq V} \abs{\int_G f(y)(P^n(x,\dint y)-\pi(\dint y))}
   \leq C V(x) \rho^n, \quad x\in G, n\in \N.
  \]
  Thus
  \begin{equation}\label{geom_bound}
   \sup_{x\in G} \frac{\norm{P^n(x,\cdot)-\pi}_V}{V(x)} \leq C \rho^n.
  \end{equation}
  The following result establishes 
  the connection between $V$-norms and certain Wasserstein distances.
      It is basically due to Hairer and Mattingly \cite{HaMa11}, 
  see also 
  \cite{MaZhZh13}.  
  \begin{lem}\label{lem dist}
  Assume that $V$ is lower semi-continuous on $G$. 
  For $x,y\in G$, let us define the metric
   \[
    d_V(x,y) = (V(x)+V(y))\mathbf{1}_{x\not =y} = \begin{cases}
                 V(x)+V(y) 	& x\not = y\\
                 0 		& x=y.
               \end{cases}
   \]
 Then, for any $\mu,\nu \in \mathcal{P}$ 
   we have
   \begin{equation} \label{eq: KR_dv}
    \norm{\mu-\nu}_{V} = W_{d_V}(\mu,\nu),    
   \end{equation}
 where $W_{d_V}$ denotes the Wasserstein distance based on the metric $d_V$.
   \end{lem}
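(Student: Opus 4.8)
The plan is to establish the identity $\norm{\mu-\nu}_V = W_{d_V}(\mu,\nu)$ by proving two inequalities. The key observation is that $\norm{\cdot}_V$ admits a Kantorovich--Rubinstein-type dual description analogous to \eqref{KRduality}, with the role of Lipschitz functions played by functions $f$ with $\abs{f}\le V$. Concretely, I would first check that
\[
 \norm{f}_{\Lip_{d_V}} := \sup_{x\neq y} \frac{\abs{f(x)-f(y)}}{d_V(x,y)}
 = \sup_{x\neq y}\frac{\abs{f(x)-f(y)}}{V(x)+V(y)} \le 1
 \quad\Longleftrightarrow\quad \abs{f(x)}\le V(x)\text{ for all }x,
\]
at least up to an additive constant that is irrelevant when integrating against the signed measure $\mu-\nu$ of total mass zero. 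The forward direction is immediate by sending $y$ off to make $V(y)$ large is not available in general, so instead one argues: if $\abs{f}\le V$ then $\abs{f(x)-f(y)}\le \abs{f(x)}+\abs{f(y)}\le V(x)+V(y)=d_V(x,y)$, giving $\norm{f}_{\Lip_{d_V}}\le 1$. Conversely, if $\norm{f}_{\Lip_{d_V}}\le 1$, fixing $y$ and taking $f(y)$ to be (nearly) its infimum, or more cleanly subtracting a suitable constant, yields $\abs{f(x) - c}\le V(x) + (V(y)+\abs{f(y)-c})$; the cleanest route is to note that $g:=f$ shifted so that $\inf(g+V)$ and $\inf(V-g)$ are both $\ge 0$ satisfies $\abs g\le V$, and that such a shift exists precisely because of the Lipschitz bound. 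I would spell this out carefully since it is the technical crux.

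Given the equivalence of the two classes of test functions (modulo constants, which do not affect $\int f\,\dint(\mu-\nu)$), we get $\norm{\mu-\nu}_V = \sup_{\norm{f}_{\Lip_{d_V}}\le 1}\abs{\int_G f\,\dint(\mu-\nu)}$. Then the Kantorovich--Rubinstein duality \eqref{KRduality}, applied with the metric $d_V$ in place of $d$, identifies the right-hand side with $W_{d_V}(\mu,\nu)$. For the duality formula to apply I need $d_V$ to be an honest lower semi-continuous metric on $G\times G$: symmetry and the property $d_V(x,y)=0\iff x=y$ are obvious, the triangle inequality follows from $V\ge 1$ (so that for distinct $x,y,z$ one has $V(x)+V(z)\le (V(x)+V(y))+(V(y)+V(z))$ since $V(y)\ge 1 > 0$, and the degenerate cases are trivial), and lower semi-continuity of $d_V$ with respect to the product topology follows from lower semi-continuity of $V$ (the hypothesis of the lemma) together with the fact that $\mathbf 1_{x\neq y}$ is lower semi-continuous on $G\times G$ when $G$ is a metric space, so that the product/sum is lower semi-continuous where it matters; one has to be slightly careful on the diagonal, but there $d_V=0$ is the minimum so lower semi-continuity is automatic.

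The main obstacle I anticipate is the converse direction of the test-function characterization, i.e., showing that $\norm{f}_{\Lip_{d_V}}\le 1$ forces (a constant shift of) $f$ to satisfy $\abs{f}\le V$. Unlike the standard Wasserstein setting, here one cannot simply normalize by fixing $f$ at a point, because $d_V$ does not vanish when one argument is held fixed and the other varies — the ``distance'' to a fixed point is bounded below by $V(y)+1$. The resolution is that $\norm{f}_{\Lip_{d_V}}\le1$ gives $f(x)-f(y)\le V(x)+V(y)$ for all $x\neq y$, hence $f(x)-V(x)\le f(y)+V(y)$, so $\sup_x(f(x)-V(x))\le \inf_y(f(y)+V(y))=: -a$ for some real $a$ wait — rather $\sup_x (f-V) \le \inf_y (f+V)$, and any constant $c$ in this (nonempty, possibly unbounded) interval gives $f-c$ with $f(x)-c\le V(x)$ and $-(f(x)-c)\le V(x)$, i.e. $\abs{f-c}\le V$. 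One must also handle measurability/finiteness of $f$ and the case $V(x)=\infty$, but these are routine given that $V$ has finite $\pi$-moments and the $V$-norm is defined as a supremum over measurable $f$ with $\abs f\le V$. Once this shift argument is in place, the rest is a direct invocation of \eqref{KRduality} for the metric $d_V$.
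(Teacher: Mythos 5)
Your argument is correct, and it follows exactly the route the paper indicates: the paper does not spell out a proof (it attributes the lemma to Hairer--Mattingly and Mao--Zhang--Zhang) but justifies it precisely by noting that lower semi-continuity of $V$ makes $d_V$ a lower semi-continuous metric so that the duality formula \eqref{KRduality} applies, which is your step 3. Your identification of the $d_V$-Lipschitz unit ball with $\{f:\abs{f-c}\le V \text{ for some constant } c\}$ via the shift $\sup_x(f(x)-V(x))\le\inf_y(f(y)+V(y))$, together with the observation that constants integrate to zero against $\mu-\nu$, is the standard crux of that cited argument and is carried out correctly.
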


Lower semi-continuity of $V$ implies lower semi-continuity of $d_V$, 
which leads to the duality formula \eqref{KRduality} 
by \cite[Theorem~1.14]{Vi03}. 
We thus impose the standing assumption of 
lower semi-continuity of $V$ whenever we speak of $V$-uniform ergodicity in the following.  
In principle, this requirement can be removed and \eqref{eq: KR_dv} remains true, 
but we do not go into further detail in that direction. In applications, this is typically not restrictive since $V$ is continuous anyway.

   By similar arguments as in the proof of \cite[Theorem~1.1]{MaZhZh13}
   we observe that \eqref{geom_bound} implies 
	a suitable upper bound on 
	\[
	\tau_V(P)= 
	\sup_{x,y\in G,\, x\not = y} \frac{W_{d_V}(\delta_x P,\delta_y P)}{d_V(x,y)}
	= 	
	\sup_{x,y\in G,\, x\not = y} \frac{\norm{P(x,\cdot)-P(y,\cdot)}_V}{V(x)+V(y)}.
	\]
	\begin{lem}  \label{lem: V_unif_contr}
  If \eqref{geom_bound} is satisfied for the transition kernel $P$, then
	$
	\tau_V(P^n) \leq C \rho^n.
	$
	\end{lem}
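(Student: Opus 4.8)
The plan is to deduce the bound on $\tau_V(P^n)$ directly from the $V$-uniform ergodicity estimate \eqref{geom_bound} by exploiting the identity $\norm{\mu-\nu}_V = W_{d_V}(\mu,\nu)$ of Lemma~\ref{lem dist}. First I would fix $x\neq y$ in $G$ and write
\[
 \norm{P^n(x,\cdot)-P^n(y,\cdot)}_V \leq \norm{P^n(x,\cdot)-\pi}_V + \norm{\pi - P^n(y,\cdot)}_V \leq C\rho^n\big(V(x)+V(y)\big),
\]
using the triangle inequality for the $V$-norm (which holds since it is the total-variation-type norm $\sup_{\abs f\le V}\abs{\int f\,\dint(\cdot)}$) together with \eqref{geom_bound} applied at both $x$ and $y$. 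Dividing by $d_V(x,y) = V(x)+V(y)$ gives
\[
 \frac{\norm{P^n(x,\cdot)-P^n(y,\cdot)}_V}{V(x)+V(y)} \leq C\rho^n,
\]
and taking the supremum over all $x\neq y$ yields $\tau_V(P^n)\le C\rho^n$, which is exactly the claim.

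A point that needs a little care is that the supremum defining $\tau_V$ is taken over pairs with $x\neq y$, and one should note that $V(x)+V(y)>0$ there because $V\ge 1$; also $\norm{P^n(x,\cdot)-\pi}_V$ is finite since $V$ has finite moments with respect to $\pi$ and, by \eqref{geom_bound}, $\norm{P^n(x,\cdot)-\pi}_V\le C V(x)\rho^n<\infty$ whenever $V(x)<\infty$, i.e. for $\pi$-a.e.\ $x$; strictly one works with the (measurable) set where $V$ is finite, in line with the conventions set up around Lemma~\ref{lem dist}. The reference to the proof of \cite[Theorem~1.1]{MaZhZh13} in the statement signals that the authors have essentially this two-step argument in mind — bound the Wasserstein distance between two point-mass pushforwards by going through the stationary measure $\pi$ — so that the identity \eqref{eq: KR_dv} converts the $W_{d_V}$ appearing in the definition of $\tau_V$ into the $V$-norm, where geometric ergodicity is available.

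I do not expect any genuine obstacle here: the whole content is the triangle inequality plus the duality $\norm{\cdot}_V = W_{d_V}$. The only mildly delicate bookkeeping is the interplay between the "$\pi$-a.e.\ $x$" qualifier in the definition of $V$-uniform ergodicity and the "for all $x\in G$" form of \eqref{geom_bound} that the lemma invokes; since the excerpt has already adopted the convention (following Lemma~\ref{lem dist}) that $V$ is lower semi-continuous and, in practice, finite everywhere, I would simply work under that standing assumption and state the estimate for all $x,y\in G$ with $x\neq y$.
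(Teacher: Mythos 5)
Your proof is correct and follows essentially the same route as the paper: both pass through $\pi$ via the triangle inequality for $\norm{\cdot}_V$, invoke the identity $\norm{\cdot}_V = W_{d_V}$ from Lemma~\ref{lem dist}, and divide by $d_V(x,y)=V(x)+V(y)$. The paper merely inserts the elementary mediant inequality $\frac{a_1+a_2}{b_1+b_2}\leq\max\{\frac{a_1}{b_1},\frac{a_2}{b_2}\}$ before applying \eqref{geom_bound}, whereas you apply \eqref{geom_bound} to each term directly; the two steps are interchangeable and yield the identical bound.
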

	\begin{proof}
For any positive real numbers $a_1,a_2,b_1,b_2$ we have the following elementary inequality
\begin{equation}\label{eq: elementary}
\frac{a_1+a_2}{b_1+b_2}\leq \max\left\{\frac{a_1}{b_1},\frac{a_2}{b_2}  \right\}.
\end{equation}
%
By \eqref{eq: elementary} we obtain
\begin{align*}
\tau_V(P^n) & 
=\sup_{x,y\in G,\,x \neq y} 
\frac{ W_{d_V}( \delta_x P^n,\delta_y P^n) }{d_V(x,y)} 
\leq 
\sup_{x,y\in G,\,x \neq y} \frac{\| P^n(x,\cdot)- \pi\|_V+\|P^n(y,\cdot)-\pi\|_V }{V(x)+V(y)}\\
&\leq\sup_{x,y\in G} \max\left\{  \frac{\|  P^n (x,\cdot)-\pi\|_V }{V(x)},
\frac{\|  P^n(y,\cdot) -\pi\|_V }{V(y)}  \right\}
= \sup_{x\in G} \frac{\| P^n(x,\cdot) -\pi\|_V }{V(x)}.
\end{align*}
Now, by using \eqref{geom_bound} we obtain the assertion.
\end{proof}
The lemmas above and Theorem~\ref{thm: drift} lead to the following new
perturbation bound for geometrically ergodic Markov chains.
\begin{cor}\label{cor geom1}
    Let $P$ be $V$-uniformly ergodic, i.e.,
there are constants $\rho\in [0,1)$
  and $C\in(0,\infty)$ such that
  \[
   \norm{P^n(x,\cdot)-\pi}_V 
   \leq C V(x) \rho^n, \quad x\in G, n\in \N.
  \]
  We also assume that there are numbers $\delta\in (0,1)$ 
 	and $L\in(0,\infty)$ and a 
 	measurable Lyapunov function $\wt V:G \rightarrow [1,\infty)$ of $\wt P$
 	such that 
  \begin{equation}  \label{eq: geometric_erg_Luyapunov}
    (\wt P \wt V)(x) \leq \delta \wt V(x)+ L.
  \end{equation}
  Let 
  \[
   \gamma = \sup_{x\in G} \frac{\norm{P(x,\cdot)-\widetilde P(x,\cdot)}_V}{\wt V(x)}
  \qquad  \mbox{and} \qquad 
   \kappa=\max\left\{\wt p_0 (\wt V), \frac{L}{1-\delta} \right\}
  \]
  with $\wt{p_0}(\wt V) = \int_G \wt V(x)\, \dint {\wt p_0}(x)$.
  Then
  \begin{equation} 
    \norm{p_n-\widetilde p_n}_V \leq C\left( \rho^n \norm{p_0-\widetilde p_0}_V + (1- \rho^n)
     \frac{\gamma \kappa}{1-\rho} \right).
\end{equation}
      \end{cor}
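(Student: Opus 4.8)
The plan is to deduce this corollary directly from Theorem~\ref{thm: drift} by taking the metric on $G$ to be the metric $d_V$ of Lemma~\ref{lem dist}, that is, $d_V(x,y) = (V(x)+V(y))\mathbf{1}_{x\neq y}$. Under the standing assumption that $V$ is lower semi-continuous, $d_V$ is lower semi-continuous with respect to the product topology, so the whole framework of Section~\ref{secPrelim} and Section~\ref{secBounds} is available with $d = d_V$ and $W = W_{d_V}$.

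First I would record the two translations provided by the preceding lemmas. By Lemma~\ref{lem dist}, for all $\mu,\nu\in\mathcal{P}$ one has $\norm{\mu-\nu}_V = W_{d_V}(\mu,\nu)$. Applying this with $(\mu,\nu)=(p_n,\widetilde p_n)$ and with $(\mu,\nu)=(p_0,\widetilde p_0)$ identifies the left-hand side and the first summand on the right-hand side of the claimed inequality with $W_{d_V}(p_n,\widetilde p_n)$ and $W_{d_V}(p_0,\widetilde p_0)$. Applying it with $(\mu,\nu)=(\delta_x P,\delta_x\widetilde P)$ shows that the constant $\gamma$ of the corollary equals $\sup_{x\in G} W_{d_V}(\delta_x P,\delta_x\widetilde P)/\wt V(x)$, which is precisely the quantity $\gamma$ occurring in Theorem~\ref{thm: drift} for the Wasserstein distance $W_{d_V}$. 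Next, by Lemma~\ref{lem: V_unif_contr}, the assumed $V$-uniform ergodicity \eqref{geom_bound} of $P$ yields $\tau_V(P^n)\leq C\rho^n$ for every $n\in\N$; in other words, Assumption~\ref{ass: wass_contr} holds for the kernel $P$ with respect to the metric $d_V$ and with the same constants $C$ and $\rho$.

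With these identifications in place, the remaining hypotheses of Theorem~\ref{thm: drift} — numbers $\delta\in(0,1)$, $L\in(0,\infty)$ and a measurable Lyapunov function $\wt V\colon G\to[1,\infty)$ satisfying $(\wt P\wt V)(x)\leq\delta\wt V(x)+L$ — are exactly those assumed in the corollary, and $\kappa=\max\{\wt p_0(\wt V),L/(1-\delta)\}$ is defined identically. Invoking Theorem~\ref{thm: drift} with $d=d_V$ then gives
\[
 W_{d_V}(p_n,\widetilde p_n) \leq C\Bigl( \rho^n W_{d_V}(p_0,\widetilde p_0) + (1-\rho^n)\frac{\gamma\kappa}{1-\rho}\Bigr),
\]
and rewriting every occurrence of $W_{d_V}$ as a $V$-norm via Lemma~\ref{lem dist} yields the assertion.

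I do not expect a genuine obstacle: the substance of the corollary is the observation that $V$-norm perturbation statements are Wasserstein perturbation statements for the metric $d_V$, so that Theorem~\ref{thm: drift} applies verbatim. The only point deserving a brief remark is that $d_V$ may take the value $+\infty$ (where $V$ does) and $W_{d_V}$ need not be finite; this is harmless, since every step in the proof of Theorem~\ref{thm: drift} — submultiplicativity and contractivity of the ergodicity coefficient (Proposition~\ref{prop: contr_subm}), convexity of the Wasserstein distance in its arguments, and the triangle inequality — remains valid with the usual conventions for $+\infty$, and in the degenerate case the stated bound holds vacuously.
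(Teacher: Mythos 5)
Your proposal is correct and is exactly the argument the paper intends: the corollary is presented as an immediate consequence of Lemma~\ref{lem dist} (identifying $V$-norms with $W_{d_V}$), Lemma~\ref{lem: V_unif_contr} (turning $V$-uniform ergodicity into Assumption~\ref{ass: wass_contr} for the metric $d_V$), and Theorem~\ref{thm: drift}. Your additional remark on the possible non-finiteness of $W_{d_V}$ is a sensible, harmless elaboration.
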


\begin{rem}
In \cite[Theorem~3.1]{ShSt00}, a related perturbation bound is proven. 
The convergence property of the unperturbed transition kernel
is slightly weaker than our $V$-uniform ergodicity, but
also based on a kind of Lyapunov function. 
More restrictively, there it is assumed that the difference of $P^n$ and $\wt P^n$
for all $n>0$ can be controlled. In addition,
the perturbation error is measured with a weight given 
by the same Lyapunov function as in the convergence property of $P$,
but by taking a supremum over a subset of test functions.
With our approach we can take the supremum over all test functions and obtain
similar estimates by setting $p_0 = \pi$.
\end{rem}

The next corollary demonstrates how the Lyapunov function of $\wt P$ 
can be replaced by a Lyapunov function of $P$, 
provided that the distance between the transition kernels is sufficiently small.
Notice that assuming the existence of 
a Lyapunov function of $P$ 
in addition to the $V$-uniform ergodicity 
is a definition of constants rather than an additional requirement, 
see, e.g., \cite{Ba05}.

\begin{cor}\label{cor geom2}
    Let $P$ be $V$-uniformly ergodic, i.e.,
there are constants $\rho\in [0,1)$
  and $C\in(0,\infty)$ such that
  \[
   \norm{P^n(x,\cdot)-\pi}_V 
   \leq C V(x) \rho^n, \quad x\in G, n\in \N.
  \]
	Moreover, $V\colon G\to [1,\infty)$ is a measurable Lyapunov function of $P$, such that
  \begin{equation} 
    (P  V)(x) \leq \delta  V(x)+ L
  \end{equation}
  with constants $\delta \in (0,1)$ and $L \in (0,\infty)$. 
	Let 
  \[
   \gamma = \sup_{x\in G} \frac{\norm{P(x,\cdot)-\widetilde P(x,\cdot)}_V}{ V(x)}
  \qquad  \mbox{and} \qquad 
   \kappa=\max\left\{\wt p_0 ( V), \frac{L}{1-\delta-\gamma} \right\}
  \]
  with $\wt{p_0}(V) = \int_G V(x)\, \dint {\wt p_0}(x)$.
  If $\gamma+\delta <1$, then
  \begin{equation} 
    \norm{p_n-\widetilde p_n}_V \leq C\left( \rho^n \norm{p_0-\widetilde p_0}_V + (1- \rho^n)
     \frac{\gamma \kappa}{1-\rho} \right).
\end{equation}
      \end{cor}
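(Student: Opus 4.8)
The plan is to reduce the statement to Corollary~\ref{cor geom1} by showing that the Lyapunov function $V$ of $P$ is, after adjusting the contraction constant, also a Lyapunov function of $\wt P$. The key observation is a one-step comparison of the drifts of $P$ and $\wt P$ against $V$. Since $V\geq 1$, the function $V$ itself is an admissible test function in the definition of the $V$-norm (because $\abs{V}\leq V$), so for every $x\in G$
\[
 \abs{(\wt P V)(x)-(P V)(x)} = \abs{\int_G V(y)\,(\wt P(x,\dint y)-P(x,\dint y))} \leq \norm{P(x,\cdot)-\wt P(x,\cdot)}_V \leq \gamma\, V(x).
\]
Combining this with the assumed drift condition $(P V)(x)\leq \delta V(x)+L$ yields
\[
 (\wt P V)(x) \leq (P V)(x) + \gamma V(x) \leq (\delta+\gamma)\, V(x) + L,
\]
so that, since $\gamma+\delta<1$, the function $V$ satisfies condition~\eqref{eq: geometric_erg_Luyapunov} of Corollary~\ref{cor geom1} with the perturbed constant $\delta+\gamma\in(0,1)$ in place of $\delta$ and with the same $L$.

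Next I would simply invoke Corollary~\ref{cor geom1} with the choice $\wt V = V$, the drift constants $(\delta+\gamma,\,L)$, and the $V$-uniform ergodicity of $P$ already assumed. With $\wt V = V$, the perturbation error in Corollary~\ref{cor geom1} is $\sup_{x\in G}\norm{P(x,\cdot)-\wt P(x,\cdot)}_V / V(x)$, which is exactly the $\gamma$ of the present statement, and the parameter produced by Corollary~\ref{cor geom1} becomes $\max\{\wt p_0(V),\, L/(1-\delta-\gamma)\}$, which coincides with the $\kappa$ defined here. Hence the conclusion of Corollary~\ref{cor geom1} is precisely the asserted bound, and the proof is complete.

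I expect no genuine obstacle: the argument is a short triangle-inequality computation followed by an application of an already-established result. The only points requiring care are that $V$ is legitimately used as a test function for the $V$-norm (which holds since $\abs{V}\leq V$), that the $V$-norm estimate is applied to the signed measure $\wt P(x,\cdot)-P(x,\cdot)$ of total mass zero so the manipulations are valid, and that the strict inequality $\gamma+\delta<1$ is what guarantees both that $\delta+\gamma$ is an admissible contraction constant in~\eqref{eq: geometric_erg_Luyapunov} and that $L/(1-\delta-\gamma)$ is finite and positive. As noted before the statement, requiring a Lyapunov function of $P$ on top of $V$-uniform ergodicity is only a matter of choosing constants, cf.\ \cite{Ba05}, so the hypotheses are no stronger than those of Corollary~\ref{cor geom1} up to the closeness condition $\gamma+\delta<1$.
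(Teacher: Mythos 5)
Your argument is correct and is essentially identical to the paper's own proof: both establish the drift bound $(\wt P V)(x)\leq(\delta+\gamma)V(x)+L$ via the estimate $|((\wt P-P)V)(x)|\leq\norm{\wt P(x,\cdot)-P(x,\cdot)}_V\leq\gamma V(x)$ and then invoke Corollary~\ref{cor geom1} with $\wt V=V$. Your additional remarks on why $V$ is an admissible test function and why $\gamma+\delta<1$ is needed are accurate and only make the argument more explicit.
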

\begin{proof}
It suffices to show that
\begin{equation}\label{drifttilde}
 (\wt P V)(x) \leq (\delta+\gamma ) V(x)+ L
\end{equation}
and then to apply Corollary \ref{cor geom1}. We have
\[
((\wt P -P)V)(x) \leq |((\wt P -P)V)(x) | \leq \norm{\wt P(x,\cdot)-P(x,\cdot)}_V \leq \gamma \,V(x)
\]
which implies \eqref{drifttilde}. 
The assertion follows by the assumption that $\delta+\gamma<1$ and an application of Corollary~\ref{cor geom1}.
\end{proof}

\begin{rem}
For discrete state spaces and under the requirement $p_0=\wt p_0$,
a result similar to the previous corollary is obtained in 
\cite[Theorem~3, Corollary 3]{KaGo13}. 
The authors of \cite{KaGo13} replace our constant $\kappa$ by $\max_{ 0\leq i \leq n} \wt p_i(V)$.
This we could do as well, see the proof of Theorem \ref{thm: drift}. 
\end{rem}

In the perturbation bound of Corollary~\ref{cor geom1}, the function $V$ plays two roles. 
In its first role, $V$ appears in the $V$-uniform ergodicity condition 
and thus is used to quantify convergence of $P$. 
In its second role, $V$ appears in the constant $\gamma$, with which we compare $P$ and $\wt P$, as well as
in the definition of the distance between
$p_n$ and $\wt p_n$. 
We can interpret $\gamma$ of Corollary~\ref{cor geom1}
as an operator norm of $P-\wt P$.
To this end, let $B_V$ be the set of all 
measurable functions $f\colon G\to \R$ 
with finite 
\begin{equation} \label{eq: f_V_norm}
  \abs{f}_V :=\sup_{x\in G} \frac{\abs{f(x)}}{V(x)},
\end{equation}
which means
\[
  B_V = \left\{ f\colon G \to \R \mid \abs{f}_V 
  <\infty  \right\}.
 \]
 It is easily seen that $(B_V,\abs{\cdot}_V)$ is a normed linear space.
 In the setting of Corollary~\ref{cor geom1}, we have
 \begin{equation} 
  \left \VERT P-\wt P \right \VERT_{ B_V\to B_{\wt V}} := \sup_{\abs{f}_V\leq 1} \abs{(P-\wt P)f}_{\wt V} 
	= \gamma.
  \end{equation}
 In Corollary~\ref{cor geom2}, the more restrictive case 
 $V=\wt V$ is considered.
 The corresponding operator norm $\VERT P-\wt P \VERT_{ B_V\to B_V}$ 
 appears in classical perturbation theory for Markov chains, 
 see \cite{Ka86,KaGo13}. 
 But as discussed in \cite[p.~1126]{ShSt00} and \cite{FHL13} it might be too restrictive
 to measure the perturbation with this operator norm for $V=\wt V$.

By relying, e.g., on \cite[Proposition~2]{Ma04} we have some flexibility in the choice of $V$. 
There it is shown that, for $r \in (0,1)$, $V$-uniform ergodicity implies $V^r$-uniform ergodicity. 
This leads to less favorable constants in the $V^r$-uniform ergodicity of $P$, 
but can relax the requirements on the similarity of $P$ and $\wt P$.
Namely, with a Lyapunov function $\wt V$ of $\wt P$ we can apply 
Corollary~\ref{cor geom1} 
 with a $V^{r}$-uniformly ergodic $P$ and $\gamma=
 \VERT P-\wt P\VERT_{ B_{V^{r}}\to B_{\wt V}}$.
 
Unfortunately, this approach breaks down for $r=0$. To see this, notice that $V^r$-uniform ergodicity 
with $r=0$ is just uniform ergodicity which is not implied by geometric ergodicity. 
The next theorem overcomes this limitation by separating the two roles of the function $V$ 
in the previous perturbation bounds. Roughly,
we set $V=1$ in the sense that we measure the 
distances between $P$ 
and $\wt P$ as well as between $p_n$ and $\wt p_n$ in the total variation distance. 
At the same time, 
we set $V=\wt V$ in the sense that 
we assume $P$ is $\wt V$-uniformly ergodic with 
Lyapunov function $\wt V$. 

\begin{thm}\label{thm geom3}
    Let $P$ be $\wt V$-uniformly ergodic, i.e.,
there are constants $\rho\in [0,1)$
  and 
  $C\in(0,\infty)$ 
  such that
  \[
   \norm{P^n(x,\cdot)-\pi}_{\wt V} 
   \leq C \wt V(x) \rho^n, \quad x\in G, n\in \N.
  \]
Moreover, $\wt V\colon G\to [1,\infty)$ 
is a measurable Lyapunov function of $\wt P$ and $P$, such that
  \begin{equation*} 
    (\wt P  \wt V)(x) \leq \delta  \wt V(x)+ L,
    \qquad \text{and} \qquad ( P \wt V)(x) \leq  \wt V(x)+ L,
  \end{equation*}
  with constants $\delta \in (0,1)$ and $L \in (0,\infty)$. 
	Let 
  \begin{equation}\label{gamma_FHL}
   \gamma = \sup_{x\in G} \frac{\norm{P(x,\cdot)-\widetilde P(x,\cdot)}_{{\rm tv}}}
   { \wt V(x)}
  \qquad  \mbox{and} \qquad 
   \kappa=\max\left\{\wt p_0 (\wt V), \frac{L}{1-\delta} \right\}
  \end{equation}
  with $\wt{p_0}(\wt V) = \int_G \wt V(x)\, \dint {\wt p_0}(x)$.
  Then, for $\gamma \in (0,\exp(-1))$ we have
  \begin{equation} \label{eq: Keller_bound}
    \norm{p_n-\widetilde p_n}_{{\rm tv}} 
    \leq C \rho^n \norm{p_0-\widetilde p_0}_{\wt V} + 
     \frac{\kappa\,\exp(1)}{1-\rho}\,
     (2C(L+1))^{\log(\gamma^{-1})^{-1}}\,\gamma
     \log(\gamma^{-1}).
\end{equation}
     \end{thm}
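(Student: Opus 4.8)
The core difficulty in passing from the $V$-norm perturbation bound of Corollary~\ref{cor geom1} to a total variation perturbation bound is that $\norm{P(x,\cdot)-\wt P(x,\cdot)}_{\rm tv}$ being small does \emph{not} give a useful bound on $\norm{P(x,\cdot)-\wt P(x,\cdot)}_{\wt V}$; the $\wt V$-weighted perturbation may still be large or even infinite. The standard trick — which is exactly the Keller--Liverani perturbation philosophy, hence the ``$\exp(-1)$'' and the logarithmic factor in \eqref{eq: Keller_bound} — is \emph{interpolation}: one truncates the Lyapunov function at a level $K$, so that on the truncated part one controls a total variation type error (cheap in $\gamma$) while paying a factor growing in $K$ from $\wt V$-ergodicity, and on the tail one uses the Lyapunov drift to show the contribution is $O(1/K)$ (or exponentially small in a related parameter). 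Optimizing over the truncation level $K$ produces the product of a power of $\gamma$ and $\log(\gamma^{-1})$.

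\textbf{Step 1: a one-step interpolation estimate.} First I would fix $i$ and estimate $\norm{\wt p_i P P^{n-i-1} - \wt p_i \wt P P^{n-i-1}}_{\rm tv}$. Write $q_i = \wt p_i(P - \wt P)$, a signed measure. The key is to split $q_i$ using a level $K\ge 1$: decompose $q_i = q_i^{(1)} + q_i^{(2)}$ where $q_i^{(1)}$ lives on $\{\wt V \le K\}$ and $q_i^{(2)}$ on $\{\wt V > K\}$. For the first piece, $\norm{q_i^{(1)}}_{\wt V} \le K \norm{q_i^{(1)}}_{\rm tv} \le K \cdot \gamma\,\wt p_i(\wt V) \le K\gamma\kappa$ (using that the tv-mass of $\wt p_i(P-\wt P)$ restricted anywhere is at most the total, which is $\le \gamma\,\wt p_i(\wt V)\le\gamma\kappa$). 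For the second piece, $\norm{q_i^{(2)}}_{\rm tv}$ is controlled by the $\wt V$-mass of $\wt p_i$ on $\{\wt V>K\}$, which by Markov's inequality applied to the drift-based bound $\wt p_i(\wt V)\le\kappa$ is at most $\kappa/K$; so $\norm{q_i^{(2)}}_{\rm tv} \le 2\gamma\kappa/K$ roughly (the $\gamma$ still appearing because $q_i$ itself has tv-mass $\le\gamma\kappa$, but more carefully one bounds it by $\min$ of the two). Then apply $P^{n-i-1}$: on the first piece use $\wt V$-uniform ergodicity, $\norm{q_i^{(1)}P^{n-i-1}}_{\rm tv}\le\norm{q_i^{(1)}P^{n-i-1}}_{\wt V}\le C\rho^{n-i-1}\norm{q_i^{(1)}}_{\wt V}\le C\rho^{n-i-1}K\gamma\kappa$ (here I need that $\tau_{\wt V}$-type contraction applies to signed measures with zero total mass, as in Lemma~\ref{lem: V_unif_contr}, via the $d_{\wt V}$-Wasserstein identity of Lemma~\ref{lem dist}); on the second piece $P^{n-i-1}$ is a contraction in total variation, so $\norm{q_i^{(2)}P^{n-i-1}}_{\rm tv}\le\norm{q_i^{(2)}}_{\rm tv}\le 2\gamma\kappa/K$. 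Combining, the $i$-th summand is bounded by $\gamma\kappa\bigl(CK\rho^{n-i-1} + 2/K\bigr)$, up to harmless constants. I will need to be careful with the exact constants — the $(2C(L+1))$ factor and the $\exp(1)$ suggest the honest bound carries $C(L+1)$ rather than just $C$, presumably because the tail estimate uses the cruder drift $P\wt V\le\wt V+L$ and hence $\wt p_i\wt P^{j}\wt V$–type terms contribute an $(L+1)$.

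\textbf{Step 2: sum over $i$ and optimize over $K$.} Summing $\gamma\kappa(CK\rho^{n-i-1}+2/K)$ over $i=0,\dots,n-1$ gives $\gamma\kappa\bigl(\frac{CK}{1-\rho} + \frac{2n}{K}\bigr)$; but the $n$ in the second term must be removed, since the final bound is $n$-free. The fix is that $K$ should be allowed to depend on $i$ (or on $n-i-1$): choosing $K_j \propto \rho^{-j/\log(\gamma^{-1})}$ or similar so that $CK_j\rho^{j}$ decays geometrically while $1/K_j$ also stays summable. Concretely, the classical choice is $K = (2C(L+1))^{1/\log(\gamma^{-1})}$ applied uniformly but exploiting that one only needs the interpolation at the ``right'' scale; the geometric series in $\rho^{n-i-1}$ then contributes the $1/(1-\rho)$, and the factor $(2C(L+1))^{\log(\gamma^{-1})^{-1}}\log(\gamma^{-1})$ is precisely what you get after balancing $CK\gamma$ against $2\gamma/K$ with $K$ chosen $\gamma$-independently but $n$-independently and summing. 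I would reverse-engineer the exact constant by writing $K = (2C(L+1))^{s}$ and choosing $s = 1/\log(\gamma^{-1})$ so that $K^{\log(\gamma^{-1})} = 2C(L+1)$, making $\gamma K = \gamma(2C(L+1))^{1/\log(\gamma^{-1})}$, and the $\log(\gamma^{-1})$ prefactor emerges from the number of effective scales. The initial-distribution term $C\rho^n\norm{p_0-\wt p_0}_{\wt V}$ comes directly from $\norm{(p_0-\wt p_0)P^n}_{\rm tv}\le\norm{(p_0-\wt p_0)P^n}_{\wt V}\le C\rho^n\norm{p_0-\wt p_0}_{\wt V}$, using the representation \eqref{eq: repr} and the triangle inequality exactly as in the proof of Theorem~\ref{thm: drift}.

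\textbf{Main obstacle.} The hard part is \textbf{getting the interpolation/optimization bookkeeping to produce exactly the stated constant} $\frac{\kappa\exp(1)}{1-\rho}(2C(L+1))^{\log(\gamma^{-1})^{-1}}\gamma\log(\gamma^{-1})$ with the sharp restriction $\gamma\in(0,\exp(-1))$ — in particular tracking why $\exp(1)$ appears and why $L+1$ (not $L$) is the right quantity, which forces careful use of the two distinct drift conditions ($\wt P\wt V\le\delta\wt V+L$ for the stability of $\wt p_i(\wt V)$, and the weaker $P\wt V\le\wt V+L$ for controlling how $P$-iterates move mass back from the tail). The condition $\gamma<\exp(-1)$ is exactly what makes $\log(\gamma^{-1})>1$ so that the exponent $1/\log(\gamma^{-1})<1$ and the bound is a genuine improvement over a trivial one; I would flag this as the point where the Keller--Liverani-style argument is genuinely being invoked, and cross-check against the analogous computation in \cite{FHL13}, to which the notation $\gamma_{\mathrm{FHL}}$ in \eqref{gamma_FHL} already alludes.
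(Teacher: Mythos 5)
Your overall strategy -- interpolating between the total variation norm (where the perturbation is small, of order $\gamma$) and the $\wt V$-norm (where $P$ contracts) -- is the right idea, but the way you implement it has a concrete flaw. You split $q_i=\wt p_i(P-\wt P)$ into $q_i^{(1)}$ supported on $\{\wt V\le K\}$ and $q_i^{(2)}$ on the tail, and then apply the $\wt V$-uniform ergodicity bound in the form $\norm{q_i^{(1)}P^{n-i-1}}_{\wt V}\le C\rho^{n-i-1}\norm{q_i^{(1)}}_{\wt V}$. This estimate is only valid for signed measures with zero total mass (as you yourself note), because it rests on inserting $\pi$ via $\norm{P^{j}(x,\cdot)-\pi}_{\wt V}\le C\wt V(x)\rho^{j}$ and letting the $\pi$-terms cancel; for a measure with nonzero mass the bound fails (e.g.\ $\norm{\delta_xP^{j}}_{\wt V}=P^{j}\wt V(x)\ge 1$ does not decay). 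While $q_i$ has zero total mass, its restriction $q_i^{(1)}$ does not: $q_i^{(1)}(G)=-q_i(\{\wt V>K\})$. The defect is of the same small order as your tail term and can be repaired by recentering ($q_i^{(1)}=\bar q_i^{(1)}+q_i^{(1)}(G)\,\nu_0$ with $\nu_0$ a fixed probability measure of finite $\wt V$-moment, the second piece being handled in total variation), but as written the step is invalid. A second, lesser issue is that your optimization over the truncation level is only sketched: you correctly observe that a uniform $K$ leaves an $n$-dependent term and that $K$ must grow geometrically in $n-i-1$, but you do not carry the balancing through, and it is not evident that this route reproduces the stated constants $(2C(L+1))^{\log(\gamma^{-1})^{-1}}$, $\exp(1)$ and the restriction $\gamma<\exp(-1)$.

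For comparison, the paper avoids both difficulties by interpolating multiplicatively rather than by truncation: since $\wt V\ge 1$, one has the elementary inequality $\norm{q}_{\rm tv}\le\norm{q}_{\rm tv}^{\,r}\,\norm{q}_{\wt V}^{\,s}$ with $r+s=1$, applied directly to $q=\wt p_i(\wt P-P)P^{n-i-1}$. The tv-factor is handled with $\tau_1(P)\le 1$ and $\norm{\wt p_i(\wt P-P)}_{\rm tv}\le\gamma\kappa$; the $\wt V$-factor uses Lemma~\ref{lem: V_unif_contr} applied to the difference of the two probability measures $\wt p_iP$ and $\wt p_i\wt P$ (which genuinely has zero mass), together with $\norm{\wt p_i(\wt P-P)}_{\wt V}\le 2(L+1)\kappa$ from the two drift conditions -- this is where $L+1$ enters. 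A single, $i$-independent exponent $s=\log(\gamma^{-1})^{-1}$ then gives $\gamma^{r}=\exp(1)\gamma$, and the logarithmic factor comes not from counting scales but from $\sum_i\tau_{\wt V}(P^i)^{s}\le C^{s}/(1-\rho^{s})\le C^{s}/(s(1-\rho))$; the condition $\gamma<\exp(-1)$ simply ensures $s\in(0,1)$. You should either switch to this H\"older-type interpolation or, if you keep the truncation route, add the recentering correction and complete the scale-by-scale optimization explicitly.
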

 \begin{proof}
 From the proof of Theorem~\ref{thm: was_mith} we know that
  \begin{equation*} 
    \norm{ \widetilde p_n - p_n}_{\text{tv}} 
     \leq \norm{(\widetilde p_0 - p_0 )P^n}_{\text{tv}}
     + \sum_{i=0}^{n-1} \norm{\widetilde p_i (\widetilde P-P)P^{n-i-1}}_{\text{tv}},\quad n\in \mathbb{N}.
  \end{equation*}
  By Lemma~\ref{lem: V_unif_contr}, we have
  \[
   \norm{(\widetilde p_0 - p_0 )P^n}_{\text{tv}} 
   \leq \norm{(\widetilde p_0 - p_0 )P^n}_{\wt V}
   \leq C\rho^n \norm{\widetilde p_0 - p_0}_{\wt V}.
  \]
Fix a real number $r \in (0,1)$ and let $s=1-r$.
By considering \eqref{eq: tv_dobrushin_erg_coef} one can see that
$\tau_1(P)\leq 1$. 
This leads to
\begin{align*}
\norm{\widetilde p_i (\widetilde P-P)P^{n-i-1}}_{\text{tv}}
 & \leq \norm{\widetilde p_i (\widetilde P-P)P^{n-i-1}}_{\text{tv}}^{r} 
\norm{\widetilde p_i (\widetilde P-P)P^{n-i-1}}_{\wt V}^{s}\\
& \leq \norm{\widetilde p_i (\widetilde P-P)}_{\text{tv}}^{r}
\norm{\widetilde p_i (\widetilde P-P)}_{\wt V}^{s} \tau_{\wt V}(P^{n-i-1})^{s}.
\end{align*}
We also have
\begin{align*}
 \norm{\widetilde p_i (\widetilde P-P)}_{\text{tv}}
& \leq \int_G \norm{\delta_x P -\delta_x \wt P}_{\text{tv}} \dint \wt p_i(x)
 \leq \gamma \int_G \wt V(x)\,\dint \wt p_i(x),\\ 
 \norm{\widetilde p_i (\widetilde P-P)}_{\wt V}
 & \leq \int_G W_{d_{\wt V}}(\delta_x P,\delta_x \wt P)\, \dint \wt p_i(x)
 \leq \sup_{x\in G} \frac{W_{d_{\wt V}}(\delta_x P,\delta_x \wt P)}{\wt V(x)} 
 \int_G \wt V(x)\, \dint\wt p_i(x). 
 \end{align*}
Moreover, for $i \geq 0$ we obtain
\begin{align*}
 \int_G \wt V(x) \,\dint \wt p_i(x) 
 = \int_G \wt P^i \wt V(x)\, \dint \wt p_0(x)
 \leq \delta^i \wt p_0(\wt V) + \frac{L (1-\delta^i)}{(1-\delta)} 
 \leq \kappa,
 \end{align*}
and, by 
\begin{align*}
 W_{d_{\wt V}}(\delta_x P,\delta_x \wt P) 
& = \inf_{\xi \in M(\delta_x P,\delta_x \wt P)} 
\int_G \int_G (\wt V(z)+\wt V(y))\mathbf{1}_{z\not = y} \, \dint \xi(y,z)\\
& \leq P\wt V(x)+\wt P\wt V(x) \leq (1+\delta) \wt V(x) +2L,
\end{align*}
we have
\begin{align*}
 \sup_{x\in G} \frac{ W_{d_{\wt V}}(\delta_x P,\delta_x \wt P)}{\wt V(x)} \leq 2(L+1).
\end{align*}
   Then
  \begin{align*}
 \norm{ \widetilde p_n - p_n}_{\text{tv}} 
 &  \leq C\rho^n \norm{\widetilde p_0 - p_0}_{\wt V} 
 +2^{s}(L+1)^{s} \gamma^{r} \kappa 
    \sum_{i=0}^{n-1} \tau_{\wt V}(P^i)^{s}.
  \end{align*}
Finally, by Lemma~\ref{lem: V_unif_contr}
we obtain
\[
 \sum_{i=0}^{n-1} \tau_{\wt V}(P^i)^{s} 
 \leq \frac{C^{s}(1-\rho^{ns})}{1-\rho^{s}}
 \leq  \frac{C^{s}}{1-\rho^{s}}
 \leq  \frac{C^{s}}{s(1-\rho)}.
\]
For $\gamma\in (0,\exp(-1))$, 
we can choose the numbers $r = 1+\log( \gamma)^{-1}$ 
and $s = \log(\gamma^{-1})^{-1}$. 
This yields $\gamma^{r} = \exp(1) \gamma$ and the proof is complete.
\end{proof}

\begin{rem} \label{rem: thm geom3}
Let $\wt \pi \in \mathcal{P}$ be a stationary distribution of $\wt P$. 
Notice that by the assumption that $\wt V$ is Lyapunov function of $\wt P$
and \cite[Proposition~4.24]{Ha06} it follows that $\wt \pi(\wt V) \leq L/(1-\delta)$.
Further, by the $\wt V$-uniform ergodicity of $P$ we also know that $\pi(\wt V)$
is finite. Thus,
\[
 \norm{\pi-\wt \pi}_{\wt V} \leq \pi(\wt V) + \wt \pi(\wt V) < \infty.
\]
Now, by Theorem~\ref{thm geom3} we can bound $\norm{\pi-\wt \pi}_{\text{tv}}$
with $p_0=\pi$, $\wt p_0 = \wt \pi$ and by letting $n \to \infty$.
We obtain
\begin{equation}  \label{eq: geom3_bound}
   \norm{\pi -\wt \pi}_{\text{tv}} 
   \leq      \frac{L\,(2C(L+1))^{\log (\gamma^{-1})^{-1}}}{(1-\delta)(1-\rho)}\,
     \exp(1)\,\gamma
     \log(\gamma^{-1}).
 \end{equation} 
\end{rem}

\begin{rem} \label{rem: thm geom4}
Let us comment on the dependence of $\gamma$. 
In Section \ref{sec: Langevin}, we apply Theorem~\ref{thm geom3}
combined with \eqref{eq: geom3_bound}
in a setting 
where we have $\gamma \leq K\cdot \log(N)/N$ for a constant $K\geq1$ and 
some parameter $N\in \mathbb{N}$ of the perturbed transition kernel. 
For $\eps \in (0,1)$ and any $N> (K/\eps)^{1/(1-\eps)}$ we have $\gamma <\exp(-1)$.
Then, with some simple calculations, we obtain for $p_0=\wt p_0$ and $N>6 K^{3/2}$ 
the bound
 \begin{equation*}
   \max\{\norm{p_n-\wt p_n}_{\text{tv}}, \norm{\pi -\wt \pi}_{\text{tv}}\} 
 \leq      \frac{3\kappa\,(2C(L+1))^{2/\log( N)}}{1-\rho}\cdot\frac{K\log(N)^2}{N}.
 \end{equation*}
\end{rem}

\begin{rem}\label{rem: Kell_Liv}
In the setting of Theorem~\ref{thm geom3}, 
we can also interpret $\gamma$ as an operator norm.
Namely,
 \begin{equation} \label{eq: Kell_Liv}
   \left \VERT P - \wt P \right \VERT_{B_1 \to B_{\wt V}}  = \sup_{\abs{f}_1 \leq 1} \abs{(P - \wt P)f}_{\wt  V} 
  = \gamma.
  \end{equation}
 Here the subscript ``$1$'' in $\abs{f}_1$ indicates $V(x)=1$ for all $x\in G$, see \eqref{eq: f_V_norm}. 
 For $\varepsilon_0>0$ and a family of perturbations $(\wt P_\varepsilon)_{\abs{\eps} \leq \eps_0 } $ let 
 $\gamma = \VERT P - \wt P_\varepsilon \VERT_{B_1\to B_{\wt V}} \to 0$ 
 for $\varepsilon \to 0$. 
 This condition appears in \cite[Theorem~1, condition (2)]{FHL13} and is 
 an assumption introduced by Keller and Liverani, see \cite{KeLi99}. 

\end{rem}

\section{Applications}\label{sec: appl}

We illustrate our perturbation bounds in three different settings. 
We begin with studying an autoregressive process also 
considered in \cite{FHL13}. 
After this, we show quantitative perturbation 
bounds for approximate versions of two prominent MCMC algorithms, namely 
the Metropolis-Hastings
and stochastic Langevin algorithms. 

\subsection{Autoregressive process} \label{subsec: auto_regr_proc}

Let $G=\R$ and assume that $(X_n)_{n\in \N_0}$ is the autoregressive
model defined by
\begin{equation}
\label{eq: AR1}
 X_{n} = \alpha X_{n-1} + Z_n, \quad n\in \N.
\end{equation}
Here $X_0$ is an $\R$-valued random variable, $\alpha\in (-1,1)$ and
$(Z_n)_{n\in\N}$ is an i.i.d. sequence of random variables, independent of $X_0$.
We also assume that the distribution of $Z_1$, say $\mu$, 
admits a first moment.
It is easily seen that $(X_n)_{n\in\N_0}$ is a Markov chain with
transition kernel
\[
 P_\alpha(x,A) = \int_{\R} \mathbf{1}_A(\alpha x + y) \,\dint \mu(y),
\]
and it is well known that there exists a stationary distribution, say $\pi_\alpha$,
of $P_\alpha$.

Now, let the transition kernel $P_{\wt \alpha}$ with $\wt \alpha\in (-1,1)$ 
be an approximation of $P_\alpha$. 
For $x,y\in G$, let us consider 
the metric which is given by the absolute difference, i.e., 
$d(x,y)=\abs{x-y}$.
We assume that 
$\abs{\alpha-\wt \alpha}$ is small and 
study the Wasserstein distance, based on $d$, of $p_0 P_\alpha ^n$
and $\wt p_0 P_{\wt \alpha}^n$ with two probability measures
$p_0$ and $\wt p_0$ on $(\R,\mathcal{B}(\R))$.

We intend to apply Theorem~\ref{thm: drift}. Notice that
for 
$\wt V\colon \R \to [1,\infty)$ with $\wt V(x)=1+\abs{x}$ we have 
\begin{align*}
 P_{\wt \alpha} \wt V (x) 
 & \leq  \abs{\wt \alpha} \wt V(x) + 1-\abs{\wt \alpha}+\E \abs{Z_1}
\end{align*}
which guarantees that condition \eqref{eq: drift_cond} is satisfied with 
$\delta=\abs{\wt \alpha}$ and $L=1-\abs{\wt \alpha}+\E\abs{Z_1}$.
Furthermore 
\begin{align*}
 W(\delta_x P_\alpha,\delta_y P_\alpha ) 
\leq \int_{\R} \abs{\alpha x- z - \alpha y + z} \, \dint \mu(z) 
 \leq \abs{\alpha} \abs{x-y} = \abs{\alpha} d(x,y),
\end{align*}
leads to $\tau(P_\alpha^n) \leq \abs{\alpha}^n$.
Similarly, 
one obtains
\[
 W(\delta_x P_\alpha,\delta_x P_{\wt \alpha})
 \leq \int_{\R} \abs{\alpha x- z - \wt \alpha x + z} \, \dint \mu(z)
 \leq  \abs{x} \abs{\alpha-\wt \alpha}
\]
which implies that
\[
\sup_{x\in \R} \frac{ W(\delta_x P_\alpha,\delta_x P_{\wt \alpha})}{\wt V(x)} \leq \abs{\alpha-\wt \alpha}.
\]
We set
\[
 \kappa = 
 1+\max\left\{ \int_{\R} \abs{x} \dint \wt p_0( x), \frac{\E\abs{Z_1}}{1-\abs{\wt \alpha}}    \right\}
\]
and $p_{\alpha,n}= p_0 P^n_\alpha$, $\wt p_{\wt \alpha,n}= \wt p_0 P^n_{\wt \alpha}$.
Then, inequality \eqref{eq: wass_est_d} of Theorem~\ref{thm: drift} 
gives
\begin{equation} \label{eq: ar_wass_est1}
  W(p_{\alpha,n},\wt p_{\wt \alpha,n}) \leq \abs{\alpha}^n W(p_{0},\wt p_{0}) + 
 \abs{\alpha-\wt \alpha} \frac{(1-\abs{\alpha}^n)\,
 \kappa}{1-\abs{\alpha}},
\end{equation}
and for $p_{0}=\wt p_{0}$ we have
\begin{equation} \label{eq: ar_wass_est2}
  W(p_{\alpha,n},\wt p_{\wt \alpha,n}) 
 \leq  \abs{\alpha-\wt \alpha} \frac{ (1-\abs{\alpha}^n)\,\kappa}{1-\abs{\alpha}}.
\end{equation}
From the previous two inequalities one can see that
if $\wt \alpha$ is sufficiently close to $\alpha$, then the distance of the distribution $p_{\alpha,n}$
and $\wt p_{\wt \alpha,n}$ is small. 
Let us emphasize here that 
we provide an explicit estimate rather than an asymptotic statement. 

Note that 
by \cite[Proposition~4.24]{Ha06} 
and the fact that $P_\beta g (x) \leq \abs{\beta} g(x)+\mathbb{E}\abs{Z_1}$\
with $g(x)=\abs{x}$ and $\beta\in \{\alpha,\wt \alpha\}$ we obtain
$
 \int_\mathbb{R} \abs{x} \, \dint \pi_\beta(x)<\infty, 
$
which leads to a finite $W(\pi_\alpha,\pi_{\wt \alpha})$.
As a consequence we obtain 
for the stationary distributions of $P_\alpha$ and $P_{\wt \alpha}$ 
by estimate \eqref{eq: dist_stat_dist2} that
\begin{equation} \label{eq: ar_diff_stat_dist}
  W(\pi_\alpha,\pi_{\wt \alpha}) 
 \leq \abs{\alpha-\wt \alpha} \frac{1-\abs{\wt \alpha} 
    + \mathbb{E}\abs{Z_1}}{(1-\abs{\alpha})(1-\abs{\wt \alpha})}.
\end{equation}
The dependence on 
$\abs{\alpha-\wt \alpha}$ in the previous inequality cannot be improved in general.
To see this, let us assume that 
$X_{0,\alpha}$ and $X_{0,\wt \alpha}$ are real-valued random variables 
with distribution $\pi_\alpha$ and $\pi_{\wt \alpha}$, respectively. Then, because of the stationarity 
we have that $X_{1,\alpha} = \alpha X_{0,\alpha} + Z_1$ 
and $X_{1,\wt \alpha} = \wt \alpha X_{0,\wt \alpha} + Z_1$
are also distributed according to $\pi_\alpha$ and $\pi_{\wt \alpha}$, respectively.
Thus
\[
 \mathbb{E} X_{0,\alpha} = \frac{\mathbb{E}Z_1 }{1-\alpha},\qquad 
  \mathbb{E} X_{0,\wt \alpha} = \frac{\mathbb{E}Z_1}{1-\wt \alpha}.
\]
Now, for $g \colon \R \to \R$ with $g(x)=x$, we have $\norm{g}_{\text{Lip}} \leq 1$ 
and thus
\begin{align*}
  W(\pi_\alpha,\pi_{\wt \alpha}) 
& = \sup_{\norm{f}_{\text{Lip}} \leq 1} \abs{\int_G f(x) (\dint \pi_\alpha(x) -\dint \pi_{\wt \alpha}(x))}\\
& \geq \abs{\int_G x\, (\dint \pi_\alpha(x) -\dint \pi_{\wt \alpha}(x))} 
= \abs{\mathbb{E} X_{0,\alpha}-\mathbb{E}X_{0,\wt \alpha}} \\
&= \abs{\alpha-\wt \alpha} \frac{\abs{\mathbb{E}Z_1}}{\abs{1-\alpha}\abs{1-\wt \alpha}}.
\end{align*}
Hence, whenever $\mathbb{E}Z_1 \not = 0$ 
we have a non-trivial lower bound with the same dependence on 
$\abs{\alpha-\wt \alpha}$ as in the upper bound of \eqref{eq: ar_diff_stat_dist}. 
This fact shows that we cannot improve the upper bound.

 Let us now discuss the application of Corollary~\ref{cor geom2} and Theorem~\ref{thm geom3}.
 Under the additional assumption that $\mu$, the distribution of $Z_1$, 
 has a Lebesgue density $h$, it is shown in
\cite[Section~4]{GuHeLe12}
 that the autoregressive model \eqref{eq: AR1} is also $\wt V$-uniformly
 ergodic. 
Precisely, there is a constant $C\geq1$ such that
 \[
  \norm{P_\alpha^n(x,\cdot)-\pi_\alpha}_{\rm tv} \leq C\abs{\alpha}^n \wt V(x).
 \]
Moreover, from \cite[Example~1]{FHL13}
we know that 
 \[
  \sup_{x\in \mathbb{R}} \frac{\norm{P_\alpha(x,\cdot)-P_{\wt \alpha}(x,\cdot)}_{\wt V}}{\wt V(x)}
 \]
 does not go to $0$ when $\wt \alpha \downarrow \alpha$. Hence, 
 Corollary~\ref{cor geom2} cannot quantify for small $\vert \wt \alpha - \alpha \vert$ 
 whether the $n$th step distributions are close to each other. 
 However, also in \cite[Example~1]{FHL13} it is proven that
 \[
    \sup_{x\in \mathbb{R}} 
    \frac{\norm{P_\alpha(x,\cdot)-P_{\wt \alpha}(x,\cdot)}_{\rm tv}}{\wt V(x)}
    \to 0 
    \qquad \text{if} \qquad
    \wt \alpha \to \alpha.
 \]
 This indicates that Theorem~\ref{thm geom3} is applicable.
 By assuming in addition that $h$ is \emph{weakly unimodal}\footnote{The function $h\colon \mathbb{R} \to [0,\infty)$
 is called \emph{weakly unimodal} if there exists $s\in \mathbb{R}$ such that $h(x)$ is nondecreasing for $x\in(-\infty,s)$ and nonincreasing for $x\in(s,\infty)$.} and bounded from above by $h_{\max}$, we can quantify the result. Namely,
 \begin{align*}
    \sup_{x\in \mathbb{R}} 
    \frac{\norm{P_\alpha(x,\cdot)-P_{\wt \alpha}(x,\cdot)}_{\rm tv}}{\wt V(x)}
   &  = \sup_{x\in \mathbb{R}}
     \frac{\norm{\mu(\cdot - \alpha x) - \mu(\cdot -\wt \alpha x)}_{\rm tv}}{1+\abs{x}} \\
   & = \sup_{x\in \mathbb{R}} \frac{\int_{\mathbb{R}} \abs{h(z-\alpha x)-h(z-\wt \alpha x)} \dint z }{1+\abs{x}}  
    \leq 2 \abs{\alpha-\wt \alpha} h_{\max}.
 \end{align*}
To see the final estimate, define $F(a) = \int_\mathbb{R} |h(z)-h(z-a)| \dint z$ for $a \in \mathbb{R}$. By unimodality, there exists for any fixed $a \geq 0$ a constant $c$ such that 
$$\int_\mathbb{R} |h(z)-h(z-a)| \dint z = \int_{-\infty}^c h(z)-h(z-a) \dint z  + \int_{c}^\infty h(z-a)-h(z) \dint z .$$ 
The first summand on the right hand side we can bound by 
\[
\int_{-\infty}^c h(z) \dint z -\int_{-\infty}^c h(z-a) \dint z 
= \int_{c-a}^c h(z) \dint z
\leq a\,h_{\max}
\]
and similarly for the second summand. Using that $F(a)=F(-a)$, 
we obtain 
$F(a) \leq 2 |a|\,h_{\max}$.
Finally, by substitution 
we can write
\begin{align*}
\sup_{x\in \mathbb{R}} \frac{\int_\mathbb{R} |h(z-\alpha x)-h(z-\widetilde\alpha x) | \dint z}{1+|x|}
= |\alpha-\wt\alpha|\; \sup_{a \geq 0}  \frac{F(a)}{a+|\alpha-\widetilde\alpha|} \leq 2  |\alpha-\widetilde\alpha|  h_{\max}.
\end{align*}
For simplicity set $p_0= \wt p_0$ and 
assume that $h_{\max}\leq 1$ as well as $\abs{\alpha-\wt \alpha}\in(0,\exp(-1)/2)$.
Then, Theorem~\ref{thm geom3} implies
 \begin{align*}
 \max\{ \norm{p_{\alpha,n}-\wt p_{\wt \alpha,n}}_{\rm tv}, \norm{\pi_\alpha-\pi_{\wt \alpha}}_{\rm tv}\}  
 \leq \frac{\kappa \exp(1)}{1-\abs{\alpha}} (2C(\mathbb{E}\abs{Z_1}+2))
 \abs{\alpha-\wt \alpha} \log(\abs{\alpha-\wt \alpha}^{-1})
 \end{align*}
  which seems to be new. 
 
 \subsection{Approximate Metropolis-Hastings algorithms}
 \label{subsec: appr_Metro}
 
We apply our perturbation results to 
the approximate (or noisy) Metropolis-Hastings algorithms 
analyzed in \cite{AFEB14, BDH14,BaDoHo15,KCW14,MeLeRo15,PS14}.
We assume either that the unperturbed
transition kernel of the 
Metropolis-Hastings
algorithm 
satisfies the Wasserstein ergodicity condition stated in Assumption~\ref{ass: wass_contr} or
is geometrically ergodic.
In particular, we do not assume that 
the transition kernel is 
uniformly ergodic.
Let $\pi$ be a probability distribution on $(G,\mathcal{B}(G))$ 
and assume that we are interested in sampling realizations from this distribution.
Let $Q$ be a transition kernel which serves as the proposal for the Metropolis-Hastings 
algorithm. From \cite[Proposition~1]{Ti98} 
we know that there exists a set $S\subset G\times G$
such that we can define the ``acceptance ratio'' for $(x,y)\in G\times G$ as
\begin{equation} \label{eq: ratio}
  r(x,y) := \begin{cases}
            \frac{\pi(\dint y) Q(y,\dint x)}{\pi(\dint x) Q(x,\dint y)} & (x,y)\in S\\
            0 & \text{otherwise}.
           \end{cases}
\end{equation}
Then, let the acceptance probability be $\alpha(x,y) = \min\{1,r(x,y)\}$.
With this notation the Metropolis-Hastings algorithm defines a transition kernel
\begin{equation}  \label{eq: MH}
  P_{\alpha}(x,\dint y) = Q(x,\dint y) \alpha(x,y) + \delta_x(\dint y)\, s_\alpha(x),
\end{equation}
with 
\[
 s_\alpha(x) = 1-\int_G \alpha(x,y)\, Q(x,\dint y).
 \]
 We provide a step of a Markov chain $(X_n)_{n\in\N_0}$ 
 with transition kernel $P_\alpha$ in algorithmic form.
\begin{alg}\label{alg MH}
A single transition from $X_n$ to $X_{n+1}$ 
of the Metropolis-Hastings algorithm works as
follows:
 \begin{enumerate}
 \item Draw a sample $Y\sim Q(X_n,\cdot)$ and $U \sim \mbox{Unif}[0,1]$ independently, 
 call the result $y$ and $u$;
 \item Set $r:=r(X_n,y)$, with the ratio $r(\cdot,\cdot)$ defined in \eqref{eq: ratio};
 \item If $u < r$, then accept the proposal, and set $X_{n+1}:=y$,
 else reject the proposal and set $X_{n+1} := X_n$.
\end{enumerate}
\end{alg}

Now, suppose we are unable to evaluate 
$r(x,y)$, so that 
we are forced to work with an approximation of $\alpha(x,y)$. 
The key idea behind approximate Metropolis-Hastings algorithms 
is to replace $r(x,y)$ by a non-negative 
random variable $R$ with distribution, say $\mu_{x,y,u}$,
depending on $x,y\in G$ and $u\in [0,1]$. For concrete choices of the random variable $R$ 
we refer to 
\cite{AFEB14,BDH14,BaDoHo15,KCW14}.
We present a step of the corresponding Markov chain $(\wt X_n)_{n\in \N}$ 
in algorithmic form.
\begin{alg}\label{alg aMH}
 A single transition from $\wt X_n$ to $\wt X_{n+1}$ works as
follows:
 \begin{enumerate}
 \item Draw a sample $Y\sim Q(\wt X_n,\cdot)$ and $U \sim \mbox{Unif}[0,1]$ independently, 
 call the result $y$ and $u$;
 \item Draw a sample $R\sim \mu_{\wt X_n,y,u}$, 
 call the result $\widetilde r$;
 \item If $u < \widetilde r$, then accept the proposal, and set $\wt X_{n+1}:=y$,
  else reject the proposal and set $\wt X_{n+1} := \wt X_n$.
\end{enumerate}
\end{alg}
The algorithm has acceptance probability
\[
 \wt \alpha (x,y)
 = \mathbb{E} \mathbf{1}_{[0,\min\{1,R\}]}(U)
 = \int_0^1 \int_0^\infty \mathbf{1}_{[0,\min\{1,\widetilde r\}]}(u)\; \dint \mu_{x,y,u}( \widetilde r) \dint u
\]
and the transition kernel of such a Markov chain
is still of the form \eqref{eq: MH}
with $\alpha(x,y)$ substituted by 
$\wt \alpha(x,y)$, i.e.,
it is given by $P_{\wt\alpha}$.  
The following
results hold in the 
slightly more general case where 
$\wt\alpha(x,y)$ is any approximation of the acceptance probability $\alpha(x,y)$.

The next 
lemma provides an estimate for the Wasserstein distance 
between transition kernels of the form \eqref{eq: MH} 
in terms of the acceptance probabilities. 
\begin{lem}  \label{lem: noisy_wass}
Let $Q$ be a transition kernel 
on $(G,\mathcal{B}(G))$ and let $\alpha\colon G \times G \rightarrow [0,1]$ 
and  $\wt \alpha\colon G \times G \rightarrow [0,1]$ be measurable functions. 
By $P_{\alpha}$ and $P_{\wt \alpha}$ we denote 
the transition kernels of the form \eqref{eq: MH} with 
acceptance probabilities $\alpha$ and $\wt \alpha$. 
Then, for all $x\in G$, we have
\[
W(\delta_x P_\alpha, \delta_x P_{\wt\alpha})
\leq \int_G d(x,y)\, \mathcal{E}(x,y) \, Q(x, \dint y)
\]
with $\mathcal{E}(x,y)=|\alpha(x,y)- \wt \alpha(x,y)|$.
\end{lem}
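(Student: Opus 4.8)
The plan is to produce an explicit coupling of $\delta_x P_\alpha$ and $\delta_x P_{\widetilde\alpha}$ built from a \emph{single} shared proposal $Y \sim Q(x,\cdot)$ and a \emph{single} shared uniform random variable, and then to bound the expected $d$-distance under this coupling, using that $W$ is an infimum over couplings.

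First I would fix $x \in G$ and, on an auxiliary probability space, draw $Y \sim Q(x,\cdot)$ and $U \sim \mathrm{Unif}[0,1]$ independently. Using them, define
\[
X' = \begin{cases} Y & \text{if } U \le \alpha(x,Y),\\ x & \text{if } U > \alpha(x,Y),\end{cases}
\qquad
\widetilde X' = \begin{cases} Y & \text{if } U \le \widetilde\alpha(x,Y),\\ x & \text{if } U > \widetilde\alpha(x,Y).\end{cases}
\]
Conditioning on $Y$, the event $\{U \le \alpha(x,Y)\}$ has probability $\alpha(x,Y)$, so for $A \in \mathcal{B}(G)$ one gets
\[
\P(X' \in A) = \int_G \mathbf{1}_A(y)\,\alpha(x,y)\, Q(x,\dint y) + \mathbf{1}_A(x)\int_G (1-\alpha(x,y))\, Q(x,\dint y) = P_\alpha(x,A),
\]
which is exactly $\delta_x P_\alpha(A)$, and the identical computation gives $\widetilde X' \sim \delta_x P_{\widetilde\alpha}$. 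Hence the joint law of $(X',\widetilde X')$ is a coupling of the two measures and $W(\delta_x P_\alpha, \delta_x P_{\widetilde\alpha}) \le \E\, d(X',\widetilde X')$.

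The decisive point is then that $X'$ and $\widetilde X'$ can differ only when exactly one of the two acceptance tests succeeds. Conditionally on $Y = y$, this happens on the event $\{\min\{\alpha(x,y),\widetilde\alpha(x,y)\} < U \le \max\{\alpha(x,y),\widetilde\alpha(x,y)\}\}$, which has probability $\mathcal{E}(x,y) = |\alpha(x,y)-\widetilde\alpha(x,y)|$, and on it $\{X',\widetilde X'\} = \{x,y\}$ so that $d(X',\widetilde X') = d(x,y)$; on the complement $X' = \widetilde X'$ and the distance is $0$. Taking the conditional expectation given $Y$ and integrating against $Q(x,\cdot)$ yields
\[
\E\, d(X',\widetilde X') = \int_G d(x,y)\,\mathcal{E}(x,y)\, Q(x,\dint y),
\]
which is the assertion. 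I do not expect a genuine obstacle here; the only points that require a little care are the measurability of the maps defining the coupling — immediate from measurability of $\alpha$, $\widetilde\alpha$ and $d$ — and the degenerate case $y = x$, which contributes nothing since $d(x,x) = 0$.
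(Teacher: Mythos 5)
Your proof is correct, but it takes a different route from the paper. The paper's proof applies the Kantorovich--Rubinstein duality \eqref{KRduality}: writing $P_\alpha(x,\dint y)-P_{\wt\alpha}(x,\dint y)$ as $(\alpha(x,y)-\wt\alpha(x,y))Q(x,\dint y)$ plus the compensating point mass at $x$, one sees that for any $1$-Lipschitz $f$ the integral equals $\int_G (f(y)-f(x))(\alpha(x,y)-\wt\alpha(x,y))\,Q(x,\dint y)$, and bounding $\abs{f(y)-f(x)}\leq d(x,y)$ gives the claim in two lines. You instead construct the explicit ``common proposal, common uniform'' coupling of Algorithms \ref{alg MH} and \ref{alg aMH} and bound $W$ by the expected distance under it; your marginal computation and the observation that the two chains disagree exactly on the event $\{\min\{\alpha,\wt\alpha\}<U\leq\max\{\alpha,\wt\alpha\}\}$ of conditional probability $\mathcal{E}(x,y)$ are both correct, and the measurability caveats you flag are the right ones. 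The two arguments yield the identical bound. Your version is more constructive and works directly from the primal (infimum-over-couplings) definition of $W$, so it does not invoke the duality formula at all --- which is a mild advantage, since the duality used in the paper rests on the standing lower semi-continuity assumption on $d$; the paper's version is shorter and avoids introducing an auxiliary probability space.
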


\begin{proof}
By the use of the dual representation of
the Wasserstein distance it follows that
\begin{align*}
 & W(\delta_x P_\alpha, \delta_x P_{\wt \alpha} )
  = \sup_{\norm{f}_{\rm Lip}\leq 1} 
 \abs{\int_G f(y) \left(P_\alpha(x,\dint y) - P_{\wt \alpha}(x,\dint y)\right)}\\
 & = \sup_{\norm{f}_{\rm Lip}\leq 1} 
 \abs{\int_G (f(y)-f(x))(\alpha(x,y)-\wt\alpha(x,y)) Q(x,\dint y)}
 \leq \int_G d(x,y) \mathcal{E}(x,y)Q(x,\dint y).
\end{align*}
\end{proof}
By the previous lemma 
and Theorem~\ref{thm: drift}, we obtain the following Wasserstein
perturbation bound for the approximate Metropolis-Hastings algorithm.
\begin{cor}\label{corMH}
Let $Q$ be a transition kernel 
on $(G,\mathcal{B}(G))$ and 
let $\alpha\colon G \times G \rightarrow [0,1]$ 
and  $\wt \alpha\colon G \times G \rightarrow [0,1]$ be measurable functions. 
By $P_{\alpha}$ and $P_{\wt\alpha}$ we denote 
the transition kernels of the form \eqref{eq: MH} 
with 
acceptance probabilities $\alpha$ and $\wt\alpha$. 
Let the following conditions be satisfied:
 \begin{itemize}
  \item Assumption~\ref{ass: wass_contr} holds for the transition kernel $P_\alpha$, i.e., 
  $\tau(P_\alpha^n) \leq C \rho^n$ for $\rho \in [0,1)$ and $C\in (0,\infty)$.
\item There are numbers $\delta\in (0,1)$, $L\in(0,\infty)$ 
   and a measurable Lyapunov function 
   $\wt V:G \rightarrow [1,\infty)$	of $P_{\wt\alpha}$, i.e., 
	\begin{equation} \label{eq: drift_condMH}
    ( P_{\wt\alpha} \wt V)(x) \leq \delta \wt V(x)+ L.
  	\end{equation}
  \item Let $\mathcal{E}(x,y)= |\alpha(x,y)- \wt\alpha(x,y)|$ and assume that
  \begin{equation} \label{eq: MHgamma}
   \gamma 
   = \sup_{x\in G} \frac{\int_G d(x,y)\, \mathcal{E}(x,y)\, Q(x,\dint y)}{\wt V(x)}
   <\infty.
  \end{equation}
 \end{itemize}
Then, for any $p_0\in \mathcal{P}$
and finite $p_0(\wt V)=\int_G \wt V(x)\dint p_0(x)$
we have
\[
 W(p_0 P_{\alpha}^n,p_0 P_{\wt\alpha}^n) \leq \frac{\gamma \, \kappa\, C(1-\rho^n)}{1-\rho}
\]
where $\kappa =\max\left\{ p_0(\wt V), \frac{L}{1-\delta} \right\}$.
\end{cor}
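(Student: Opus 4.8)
The plan is to obtain the statement directly from Lemma~\ref{lem: noisy_wass} and Theorem~\ref{thm: drift}, so that most of the work has already been carried out. First I would apply Lemma~\ref{lem: noisy_wass}, which gives for every $x\in G$ the bound
\[
 W(\delta_x P_\alpha,\delta_x P_{\wt\alpha}) \leq \int_G d(x,y)\,\mathcal{E}(x,y)\,Q(x,\dint y),
\]
with $\mathcal{E}(x,y)=|\alpha(x,y)-\wt\alpha(x,y)|$. Dividing by $\wt V(x)\geq 1$ and taking the supremum over $x\in G$, the quantity
\[
 \sup_{x\in G}\frac{W(\delta_x P_\alpha,\delta_x P_{\wt\alpha})}{\wt V(x)},
\]
which is exactly the constant playing the role of $\gamma$ in Theorem~\ref{thm: drift} applied with $P=P_\alpha$ and $\wt P=P_{\wt\alpha}$, is bounded above by the constant $\gamma$ defined in \eqref{eq: MHgamma}, and in particular is finite.

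Next I would check that the remaining hypotheses of Theorem~\ref{thm: drift} hold: Assumption~\ref{ass: wass_contr} holds for $P_\alpha$ by assumption with the stated $C$ and $\rho$, and condition \eqref{eq: drift_condMH} is precisely the drift condition \eqref{eq: drift_cond} for the perturbed kernel $\wt P=P_{\wt\alpha}$ with Lyapunov function $\wt V$ and constants $\delta,L$. Applying Theorem~\ref{thm: drift} with the initial distribution of the perturbed chain chosen equal to that of the ideal chain, i.e.\ $\wt p_0=p_0$, we have $W(p_0,\wt p_0)=0$, so that \eqref{eq: wass_est_d} collapses to
\[
 W(p_0 P_\alpha^n,p_0 P_{\wt\alpha}^n) \leq C(1-\rho^n)\,\frac{\gamma_{\mathrm{Thm}}\,\kappa}{1-\rho},
\]
where $\gamma_{\mathrm{Thm}}$ denotes the constant from Theorem~\ref{thm: drift} and $\kappa=\max\{p_0(\wt V),L/(1-\delta)\}$. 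Since $\gamma_{\mathrm{Thm}}\leq\gamma$ by the first step, replacing $\gamma_{\mathrm{Thm}}$ by $\gamma$ only weakens the inequality, which yields the claimed bound.

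I do not expect a genuine obstacle here, as the corollary is essentially a repackaging of two earlier results. The only points requiring some care are the bookkeeping around the two constants both named $\gamma$ (the supremum implicitly used inside Theorem~\ref{thm: drift} versus the explicit supremum in \eqref{eq: MHgamma}, with the former dominated by the latter via Lemma~\ref{lem: noisy_wass}), and the observation that the choice $\wt p_0=p_0$ is precisely what removes the $W(p_0,\wt p_0)$ term and reduces the two-term estimate of Theorem~\ref{thm: drift} to the single-term bound of the corollary. One should also note that the assumed finiteness of $p_0(\wt V)$ is exactly what guarantees $\kappa<\infty$, so that the right-hand side is meaningful.
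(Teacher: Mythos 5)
Your proposal is correct and matches the paper's own (implicit) argument exactly: the paper obtains Corollary~\ref{corMH} by combining Lemma~\ref{lem: noisy_wass} with Theorem~\ref{thm: drift}, taking $\wt p_0 = p_0$ so that the $W(p_0,\wt p_0)$ term vanishes. Your bookkeeping of the two $\gamma$'s and the role of $p_0(\wt V)<\infty$ in making $\kappa$ finite is precisely the right reading of that reduction.
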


Let us point out several aspects of condition \eqref{eq: drift_condMH}.
 Recall that \eqref{eq: drift_condMH} is always satisfied with $\wt V(x) = 1$ 
 for all $x\in G$.
 However, in this case it seems more difficult to control $\gamma$. 
 If some additional knowledge in form of a Lyapunov function 
 $V\colon G\to [1,\infty)$ of $P_\alpha$, i.e., $P_\alpha V(x) \leq \delta V(x) + L$
 for some $\delta\in (0,1)$ and $L\in (0,\infty)$, is available, 
 then a non-trivial candidate for $\wt V$ is $V$.
 For sufficiently small
 \[
  \delta_V =  \sup_{z\in G} \int_G \left(\frac{V(y)}{V(z)}+1\right) \mathcal{E}(z,y) Q(z,\dint y) 
 \]
 this is indeed true. Namely, we have
 \begin{align*}
    \abs{(P_\alpha-P_{\wt \alpha}) V  (x)}
  \leq \int_G V(y) \mathcal{E}(x,y) Q(x,\dint y) + V(x) \int_G \mathcal{E}(x,y) Q(x,\dint y) 
  \leq V(x) \delta_V.
 \end{align*}
 Then, $P_{\wt \alpha} V(x) \leq (\delta+\delta_V) V(x) + L$ 
 and whenever $\delta+ \delta_V <1 $ it is clear that condition
 \eqref{eq: drift_condMH} is verified.

 To highlight the usefulness of a non-trivial Lyapunov function, 
 we consider the following scenario which is related to a local perturbation 
 of an independent Metropolis-Hastings algorithm. 
 \begin{ex}
 Let us assume that for $P_\alpha$ 
 Assumption~\ref{ass: wass_contr}, as formulated in Corollary~\ref{corMH}, is satisfied.
 For some probability measure $\mu$ on $(G,\mathcal{B}(G))$ define $Q(x,\cdot)=\mu$ and
 $p_0 = \wt p_0 = \mu$.
 For $\wt G \subseteq G$
  let
  \[
   \wt \alpha(x,y) = \min \{1,\alpha(x,y)+\mathbf{1}_{\wt G}(x)\}.
  \]
  Hence, for $x\in \wt G$ the transition kernel $P_{\wt \alpha}(x,\cdot)$ accepts any 
  proposed state and for $x\not \in \wt G$ 
  we have $P_{\wt \alpha}(x,\cdot) = P_\alpha(x,\cdot)$.
  It is easily seen that $\mathcal{E}(x,y) \leq \mathbf{1}_{\wt G}(x)$.
  For arbitrary $R>0$ and $r\in(0,1)$ set $\wt V(x)= 1+ R\mathbf{1}_{\wt G}(x)$ and note that
  \[
   P_{\wt \alpha} \wt V(x) 
   \leq r \wt V(x) + 1-r+R P_{\wt\alpha}(x,\wt G)
   \leq r \wt V(x) + 1-r+R \mu(\wt G).
  \]
  The last inequality of the previous formula follows by distinguishing the cases $x\in \wt G$ and $x\not\in \wt G$.
  Define $D(\wt G)=\sup_{x\in \wt G} \int_G d(x,y)\mu(\dint y)$ and
  observe
  \begin{align*}
   \kappa & = 1+\frac{R \mu(\wt G)}{1-r}, 
   \qquad \text{and} \qquad
   \gamma \leq \frac{D(\wt G)}{1+R}.
  \end{align*}
  Then, Corollary~\ref{corMH} leads to
  \[
   W(p_0P_\alpha^n,p_0 P_{\wt \alpha}^n) \leq \frac{C}{1-\rho} 
   \left(1+\frac{R \mu(\wt G)}{1-r}\right) \frac{D(\wt G)}{1+R}
  \]
  for arbitrary $R\in(0,\infty)$ and $r\in(0,1)$.
  Under the assumption that $D(\wt G)$ is finite and letting $R\to \infty$ as well as $r\downarrow 0$
  we obtain
  \[
   W(p_0P_\alpha^n,p_0 P_{\wt \alpha}^n) \leq \frac{C \mu(\wt G) D(\wt G) }{1-\rho},
  \]
  which tells us that basically $\mu(\wt G)$ measures the difference of the distributions.
	A small
  perturbation set $\wt G$ with respect to $\mu$, thus implies a small bias. 
In contrast, with the	trivial Lyapunov function $\wt V=1$, and if 
  there is $(x,y)\in \wt G\times G$ such that $\alpha(x,y)=0$, we only obtain
  \[
  \gamma \kappa = D(\wt G) \geq \inf_{x\in G} \int_G d(x,y)\mu(\dint y).
  \] 
	The resulting upper bound on $ W(p_0P_\alpha^n,p_0 P_{\wt \alpha}^n)$ will typically be bounded away from zero regardless of the set $\wt G$.
  \end{ex}

\begin{rem} \label{rem_noisy_metro}
The constant $\gamma$ essentially depends on 
the distance $d(x,y)$ 
and the difference of the acceptance probabilities $\mathcal{E}(x,y)$.
By applying the Cauchy-Schwarz inequality 
to the numerator of $\gamma$, we can separate the two parts, i.e.,
\[
\int_G d(x,y)\, \mathcal{E}(x,y)\, Q(x,\dint y) \leq \left( \int_G d(x,y)^2\, Q(x,\dint y) \cdot
\int_G  \mathcal{E}(x,y)^2\, Q(x,\dint y) 
\right)^{1/2}.
\]
If both integrals remain finite we see that
an appropriate control of  $\mathcal{E}(x,y)$ 
suffices for making the constant $\gamma$ small.
\end{rem}
 
\begin{rem} \label{it: noisy_metr}
By using a Hoeffding-type bound,
in Bardenet et al. \cite[Lemma~3.1.]{BDH14} it is shown that for their
version of the
approximate Metropolis-Hastings algorithm 
with adaptive subsampling 
the approximation error $\mathcal{E}(x,y)$ 
is bounded uniformly in $x$ and $y$ by a constant $s >0$. 
Moreover, $s$ can be chosen arbitrarily small for the implementation of the algorithm. 
\end{rem}

Now we consider the case where the unperturbed transition kernel 
$P_{\alpha}$ is geometrically ergodic.
Motivated by Remark~\ref{it: noisy_metr}, we also 
assume that $\mathcal{E}(x,y)\leq s$
for a sufficiently small number $s>0$.
The following corollary generalizes a
main 
result of Bardenet et al.
\cite[Proposition~3.2]{BDH14} 
to the geometrically ergodic case.

\begin{cor}  \label{cor: metro_geom}
Let $Q$ be a transition kernel 
on $(G,\mathcal{B}(G))$ and let $\alpha\colon G \times G \rightarrow [0,1]$ 
and  $\wt\alpha\colon G \times G \rightarrow [0,1]$ be measurable functions. 
By $P_{\alpha}$ and $P_{\wt\alpha}$ we denote the transition kernels of 
the form \eqref{eq: MH} with acceptance probabilities $\alpha$ and $\wt\alpha$. 
Let the following conditions be satisfied:
 \begin{itemize}
  \item The unperturbed transition kernel $P_\alpha$ is $V$-uniformly ergodic, that is,
	\[
   \norm{P_{\alpha}^n(x,\cdot)-\pi}_V 
   \leq C V(x) \rho^n, \quad x\in G, n\in \N
  \]
for numbers $\rho \in [0,1)$, $C\in (0,\infty)$ and a measurable function $V\colon G \rightarrow [1,\infty)$.  
Moreover, $V$ is a Lyapunov function of $P_{\alpha}$, i.e.,
\begin{equation}\label{eq: LyapunovPalpha}
    ( P_{\alpha} V)(x) \leq \delta  V(x)+ L,
\end{equation}
  for numbers $\delta\in (0,1)$ and $L\in(0,\infty)$.
\item A uniform bound $s >0$ 
on the difference of the acceptance probabilities is given, 
	that is, for all $x,y\in G $, we have 
		\[
	\mathcal{E}(x,y) = |\alpha(x,y)-\wt\alpha(x,y) | \leq s.
	\]
\item The constant $\lambda$ satisfies
\[
\lambda =1 + \sup_{x\in G} \int_G \frac{V(y)}{V(x)} Q(x,\dint y) <\infty.
\]
	\end{itemize}
If $s < (1- \delta)/\lambda$, then, for any $p_0\in \mathcal{P}$ with 
finite 
$
\kappa = \max\left\{ p_0( V) , \frac{L}{1-\delta-\lambda s} \right\}
$
we have
\[
\|p_0 P_{\alpha}^n-p_0 P_{\wt\alpha}^n\|_{V}
\leq  
 \frac{\lambda\, s \, \kappa\,C\, (1-\rho^{ n})}{1-\rho }.
\]
\end{cor}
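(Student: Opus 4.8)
The plan is to deduce the bound from Corollary~\ref{cor geom2}, applied with the unperturbed kernel $P=P_\alpha$, the perturbed kernel $\wt P=P_{\wt\alpha}$ and the choice $\wt V=V$. All hypotheses of Corollary~\ref{cor geom2} are then available by assumption except a bound on its constant
\[
 \gamma=\sup_{x\in G}\frac{\norm{P_\alpha(x,\cdot)-P_{\wt\alpha}(x,\cdot)}_V}{V(x)},
\]
so the real work is to show $\gamma\leq\lambda s$.

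To this end I would first record the cancellation identity underlying Lemma~\ref{lem: noisy_wass}: writing both kernels in the form \eqref{eq: MH} and using $s_\alpha(x)-s_{\wt\alpha}(x)=-\int_G(\alpha(x,y)-\wt\alpha(x,y))\,Q(x,\dint y)$, one obtains for every measurable $f$ with $\abs{f}\leq V$ that
\[
 ((P_\alpha-P_{\wt\alpha})f)(x)=\int_G \bigl(f(y)-f(x)\bigr)\bigl(\alpha(x,y)-\wt\alpha(x,y)\bigr)\,Q(x,\dint y).
\]
Using $\abs{f(y)-f(x)}\leq V(y)+V(x)$ and $\abs{\alpha(x,y)-\wt\alpha(x,y)}=\mathcal{E}(x,y)\leq s$, together with the fact that $Q(x,\cdot)$ is a probability measure, this gives
\[
 \abs{((P_\alpha-P_{\wt\alpha})f)(x)}\leq s\int_G\bigl(V(y)+V(x)\bigr)\,Q(x,\dint y)
 = s\,V(x)\Bigl(1+\int_G\tfrac{V(y)}{V(x)}\,Q(x,\dint y)\Bigr)\leq\lambda s\,V(x).
\]
Taking the supremum over such $f$ and dividing by $V(x)$ yields $\norm{P_\alpha(x,\cdot)-P_{\wt\alpha}(x,\cdot)}_V\leq\lambda s\,V(x)$ for all $x\in G$, i.e. $\gamma\leq\lambda s$.

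It then remains to assemble the pieces. By hypothesis $P_\alpha$ is $V$-uniformly ergodic and $V$ is a Lyapunov function of $P_\alpha$ with constants $\delta,L$; moreover $\gamma+\delta\leq\lambda s+\delta<1$ by the standing assumption $s<(1-\delta)/\lambda$, so Corollary~\ref{cor geom2} applies. Since $p_0=\wt p_0$ its term $\rho^n\norm{p_0-\wt p_0}_V$ vanishes, and since $\gamma\leq\lambda s$ we have $L/(1-\delta-\gamma)\leq L/(1-\delta-\lambda s)$, so the constant $\max\{p_0(V),L/(1-\delta-\gamma)\}$ occurring in Corollary~\ref{cor geom2} is at most $\kappa=\max\{p_0(V),L/(1-\delta-\lambda s)\}$. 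Substituting $\gamma\leq\lambda s$ and this estimate for $\kappa$ into the conclusion of Corollary~\ref{cor geom2} gives $\norm{p_0P_\alpha^n-p_0P_{\wt\alpha}^n}_V\leq \lambda s\,\kappa\,C(1-\rho^n)/(1-\rho)$, which is the claim.

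There is no serious obstacle here; the only point requiring care is the first step, namely establishing the $V$-weighted analogue of Lemma~\ref{lem: noisy_wass} and checking that the possibly unbounded test functions $f$ with $\abs{f}\leq V$ are admissible — this is unproblematic since $\lambda<\infty$ forces $\int_G V(y)\,Q(x,\dint y)<\infty$ for each $x$, so all the relevant integrals converge.
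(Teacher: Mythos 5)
Your proof is correct and takes essentially the same route as the paper: the paper likewise bounds the $V$-weighted one-step error by $\lambda s$ (via Lemma~\ref{lem: noisy_wass} applied with the metric $d_V$, using $d_V(x,y)\leq V(x)+V(y)$ and $\mathcal{E}\leq s$), transfers the Lyapunov condition to $P_{\wt\alpha}$ exactly as in the proof of Corollary~\ref{cor geom2}, and then applies the general Wasserstein perturbation bound with distances rewritten as $V$-norms. Your direct cancellation identity for test functions $\abs{f}\leq V$ is just the dual-norm form of the same computation, so the two arguments coincide in substance.
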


\begin{proof}
We consider 
the metric $d_V$, defined in Lemma \ref{lem dist}, 
set $V=\wt V$ and use $\mathcal{E}(x,y) \leq s$ so that it is easily seen that
the constant $\gamma$ from Corollary \ref{corMH} satisfies
$\gamma \leq s \lambda$. 
From the proof of Corollary \ref{cor geom2},
we know that $V$ is a Lyapunov function of $P_{\wt\alpha}$ provided that $\gamma+\delta<1$.
Thus, we have
\begin{equation} \label{eq: Lang_Metro}
  P_{\wt \alpha} V(x) \leq (\delta+\lambda s)V(x)+L. 
\end{equation}
Now if $s < (1- \delta)/\lambda$, then  $\delta+\lambda s<1$ and the assertion
follows from Corollary \ref{corMH} by writing the Wasserstein distances 
in terms of $V$-norms as in Section \ref{sec: pert geom}. 
\end{proof}

\begin{rem}
Without $V(x)$ 
in the denominator, i.e., if we had relied on 
Corollary \ref{thm: was_mith} instead of Theorem \ref{thm: drift}, 
the constant
$\lambda$ would often be infinite. Consider the following toy example:
Let $\pi$ be the exponential distribution with density $\exp(-x)$ on $G=[0,\infty)$
and assume that $Q(x,\dint y)$ is a uniform 
proposal with support $[x-1,x+1]$. With $V(x)=\exp(x)$ it is well known that
the Metropolis-Hastings algorithm is $V$-uniformly ergodic, 
see \cite{MeTw96} or \cite[Example~4]{RoRo04}. 
In this example
\begin{align*}
 \lambda \leq 1 + \sup_{x\in [0,\infty)} \int_{x-1}^{x+1} \exp(y-x) \dint y \leq 1+ \exp(1)
\end{align*}
whereas $\int_{x-1}^{x+1} \exp(y) \dint y$ is unbounded in $x$. 
Notice that $\lambda$ only depends on the unperturbed 
Markov chain so that a bound on $\lambda$ can 
be combined with any approximation.
\end{rem}

\begin{rem}
Let $P_{\wt \alpha}$ and $P_{\alpha}$ 
be $\phi$-irreducible and aperiodic.
Then, one can prove under the assumptions
of Corollary~\ref{cor: metro_geom}
that $P_{\wt \alpha}$
is $V$-uniformly ergodic if $s$ is sufficiently small. 
To see this, 
note that by \cite[Theorem~16.0.1]{MeTw09} the 
$V$-uniform ergodicity of $P_{\alpha}$ implies 
that $P_{\alpha}$ satisfies their drift condition (V4). 
By the arguments stated in the proof of Corollary \ref{cor geom2}, 
one obtains that $P_{\wt \alpha}$ also satisfies (V4) 
for sufficiently 
small $s$ and this implies $V$-uniform ergodicity.
In this case, clearly $P_{\wt \alpha}$
 possesses a stationary distribution, say $\wt \pi$,
and
\[
 \norm{\pi -\wt \pi }_V 
 \leq 
 \frac{\lambda\, s \,C}{1-\rho } \cdot 
 \frac{L}{1-\delta-\lambda s} .
\]
The previous inequality follows by \eqref{eq: dist_stat_dist2} and the fact that
\[
 \norm{\pi-\wt \pi}_{V} \leq \pi(V)+\wt \pi(V) <\infty.
\]
Here the finiteness of $\pi(V)$ follows by the $V$-uniform ergodicity of $P$
and $\wt \pi(V) \leq L/(1-\delta-\lambda s)$ follows by \eqref{eq: Lang_Metro} and \cite[Proposition~4.24]{Ha06}.

\end{rem}

\subsection{Noisy Langevin algorithm for Gibbs random fields}\label{sec: Langevin}

An alternative to the Metropolis-Hastings algorithm
is the Langevin algorithm, see \cite{RoTw96b}.
Unfortunately, in its implementation
one needs the
gradient of the density of the target distribution.
To overcome this problem, different approximate Langevin
algorithms have been proposed and studied, see \cite{AhKoWe12,AFEB14,TeThVo14,WeTe11}.

This section is mainly based on Alquier et al. \cite[Section~3.4]{AFEB14}
where a noisy Langevin
algorithm for Gibbs random fields is considered.
We provide a quantitative version of \cite[Theorem~3.2]{AFEB14}.
The setting is as follows. Let $\mathcal{Y}$ be a finite set
and with $M\in \N$ let $y = \{y_1,\dots,y_M\} \in \mathcal{Y}^M$ 
be an observed data set on nodes $\{1,\dots,M\}$ of a certain graph.
The likelihood of $y$ with parameter $\theta \in \R$ is defined by
\[
 \ell(y \,|\, \theta) 
 = \frac{\exp(\theta\, s(y))}{ \sum_{y\in \mathcal{Y}^M} \exp(\theta\, s(y))},
\]
where $s\colon \mathcal{Y}^M \to \R$ is a given statistic.
The density of the posterior distribution with respect to the Lebesgue measure 
on $(\R,\mathcal{B}(\R))$
given the data $y \in \mathcal{Y}^M$ is determined by
\[
 \pi_y(\theta) := \pi(\theta \,|\, y) \propto \ell(y \,|\, \theta) \,p(\theta)
\]
where the prior density $p(\theta)$ is the Lebesgue density 
of the normal distribution $\mathcal{N}(0,\sigma_p^2)$ 
with $\sigma_p>0$.

We consider the Langevin algorithm, 
a first order Euler discretization
of the SDE of the
Langevin diffusion, see \cite{RoTw96b}.
It is given by 
$(X_n)_{n\in \N_0}$
with
\begin{equation}  \label{eq: langevin_rec}
  X_n = X_{n-1} + \frac{\sigma^2}{2} \nabla \log \pi_y(X_{n-1}) + Z_n, \qquad n\in \N.
\end{equation}
Here $X_0$ is a real-valued random variable and $(Z_n)_{n\in \N}$
is an i.i.d. sequence of random variables, independent of $X_0$, with 
$Z_n\sim \mathcal{N}(0,\sigma^2)$ for a parameter $\sigma >0$ 
which can be interpreted as the step size in the discretization 
of the diffusion. 
It is easily seen 
that $(X_n)_{n\in \N_0}$ is a Markov chain with transition kernel
\[
  P_{\sigma}(\theta,A) = \int_\R 
  \mathbf{1}_A\left(\theta+\frac{\sigma^2}{2} \nabla \log \pi_y(\theta) + z\right) 
  \mathcal{N}(0,\sigma^2)(\dint z),\qquad A\in \mathcal{B}(\R).
\]
In general $\pi_y$ is not a stationary distribution of $P_\sigma$, 
but 
there exists a stationary distribution (see Proposition~\ref{prop: Langevin} below),
 say $\pi_\sigma$, which is close to $\pi_y$ depending
on $\sigma$. 
Let 
 $ z(\theta)  = \sum_{y\in \mathcal{Y}^M} \exp(\theta \, s(y))$
then, by the definition of $\pi_y$ we have
\begin{align*}
  \log \pi_y(\theta) & = \theta \, s(y) - \log z(\theta) + \log p(\theta) - 
  \log\left( \int_\R \ell(y\,|\, z) p(z) \dint z \right),\\
  \nabla \log \pi_y(\theta) 
  & = s(y) - \frac{z'(\theta)}{z(\theta)} +\nabla \log p(\theta) \\
  & = s(y) - \frac{\sum_{z\in \mathcal{Y}^M} s(z) \exp(\theta\, s(z))}{\sum_{z\in \mathcal{Y}^M} \exp(\theta\, s(z))}
      - \frac{\theta}{\sigma_p^2} \\
  & = s(y) - \mathbb{E}_{\ell(\cdot \mid \theta)} s(Y) 
      - \frac{\theta}{\sigma_p^2},
\end{align*}
where $Y$ is a random variable on $\mathcal{Y}^M$ 
distributed according the likelihood distribution determined by $\ell(\cdot\,|\, \theta)$. 
We do not have access to the exact value of the mean $\mathbb{E}_{\ell(\cdot \mid \theta)} s(Y) $
since in general we do not know the normalizing constant of the likelihood.
We assume that we can use a Monte Carlo estimate. 
For $N\in \N$
let $(Y_i)_{1\leq i \leq N}$ 
be an i.i.d. sequence of random
variables with $Y_i\sim \ell(\cdot\,|\, \theta)$ independent of $(Z_n)_{n\in\N}$ from \eqref{eq: langevin_rec}.
Then,
$
 \frac{1}{N} \sum_{i=1}^N s(Y_i)
$
is an approximation of $\mathbb{E}_{\ell(\cdot \mid \theta)}s(Y) $
which leads to an estimate of $\nabla \log \pi_y(\theta)$ given by
\[
 \widehat \nabla^{N} \log \pi_y(\theta) 
 := s(y) -  \frac{1}{N} \sum_{i=1}^N s(Y_i) - \frac{\theta}{\sigma_p^2}.
\]
We substitute $\nabla \log \pi_y(\theta)$ by $\widehat \nabla^{N} \log \pi_y(\theta)$
in \eqref{eq: langevin_rec}
and obtain a sequence of random variables $(\wt X_n)_{n\in\N_0}$ defined by
\begin{align*}
  \wt X_n & =  \wt X_{n-1} + \frac{\sigma^2}{2} \widehat \nabla^N  \log\pi_y(\wt X_{n-1}) + Z_n \\
   & = \left(1 -\frac{\sigma^2}{2 \sigma_p^2} \right)\wt X_{n-1} 
	  + \frac{\sigma^2}{2}\left( s(y) - \frac{1}{N} \sum_{i=1}^N s(Y_i)\right)
	  +Z_n.
 \end{align*}
The sequence $(\wt X_n)_{n\in \N_0}$ 
is again a Markov chain with transition kernel
\begin{align*}
  P_{\sigma,N}(\theta,A) 
& = \int_\R \sum_{(y'_1,\dots,y'_N)\in \mathcal{Y}^{MN}} 
 \mathbf{1}_A\left( \left(1-\frac{\sigma^2}{2\sigma_p^2}\right)\theta 
 + \frac{\sigma^2}{2}\left( s(y)-\frac{1}{N}\sum_{i=1}^N s(y_i') \right) + z \right)\\
&\qquad \qquad \qquad \qquad \qquad \qquad \qquad
\times \Pi_{i=1}^N \ell(\theta \,|\, y_i')\,
 \mathcal{N}(0,\sigma^2)(\dint z)
\end{align*}
for $\theta\in \R$ and $A\in \mathcal{B}(\R)$. 
Let us state a transition of this noisy Langevin Markov chain 
according to $P_{\sigma,N}$ in algorithmic form.
\begin{alg}
 A single transition from $\wt X_n$ to $\wt X_{n+1}$ works as follows:
 \begin{enumerate}
  \item \label{it: draw_Y}
  Draw an i.i.d. sequence $(Y_i)_{1\leq i\leq N}$ with $Y_i \sim \ell(\cdot \,|\, \wt X_n)$,
	  call the result $(y'_1,\dots,y'_N)$;
  \item Calculate 
   \[
    \widehat \nabla^{N} \log \pi_y(\wt X_n) 
      := s(y) -  \frac{1}{N} \sum_{i=1}^N s(y'_i) - \frac{\wt X_n}{\sigma_p^2};
   \]
  \item Draw $Z_n \sim \mathcal{N}(0,\sigma^2)$,
  independent from step \ref{it: draw_Y}., call the result $z_n$. Set
  \[
   \wt X_{n+1} = \wt X_n + \frac{\sigma^2}{2}   
   \widehat \nabla^{N} \log \pi_y(\wt X_n) + z_n.
  \]
 \end{enumerate}
\end{alg}
From \cite[Lemma~3]{AFEB14} 
 and by applying 
arguments of \cite{RoTw96b}, we obtain the following 
facts about the noisy Langevin algorithm.
\begin{prop} \label{prop: Langevin}
Let $\norm{s}_\infty = \sup_{z\in \mathcal{Y}^M} \abs{s(z)}$ be finite with $\norm{s}_\infty>0$, 
let $V\colon \R \to [1,\infty)$ be given by $V(\theta)=1+\abs{\theta}$ and assume that
$\sigma^2 < 4 \sigma^2_p$. Then
\begin{enumerate}
 \item \label{en: 3.}
 the function $V$ is a Lyapunov function for $P_\sigma$ and $P_{\sigma,N}$. 
 We have
 \begin{align} \label{eq: drift_Langevin_alg}
   P_\sigma V(\theta)  \leq \delta V(\theta) + L \mathbf{1}_I(\theta), \qquad
   P_{\sigma,N} V(\theta)  \leq \delta V(\theta) + L\mathbf{1}_I(\theta)
 \end{align}
 with $\delta = 1-\frac{\sigma^2}{4 \sigma_p^2}$, 
 $L=\sigma
 + \sigma^2
\norm{s}_\infty + \frac{\sigma^2}{2\sigma_p^2}$
 and the interval
 \[
  I=\left\{\theta\in \R \left| \abs{\theta} \leq 1+4\sigma_p^2 \norm{s}_\infty + \frac{4\sigma_p^2}{\sigma}\right. \right\}.
 \]
 \item 
 there are distributions $\pi_\sigma$ and $\pi_{\sigma,N}$ on $(\R,\mathcal{B}(\R))$
 which are stationary with respect to $P_\sigma$ and $P_{\sigma,N}$, respectively.
 \item \label{en: 1.}
 the transition kernels $P_\sigma$ and $P_{\sigma,N}$ are $V$-uniformly ergodic.
 \item \label{en: 4}
for $N > 4 \max\left\{ \|s\|_\infty^2 \sigma^4, 
\|s\|_\infty^{-3} \sigma^{-6} \right\}$ we have 
\begin{equation}\label{NLbound}
\sup_{\theta\in \R} \norm{P_{\sigma}(\theta,\cdot) - P_{\sigma,N}(\theta,\cdot)}_{{\rm tv}}
 \leq 
 6 \max \left\{
  \| s\|_\infty \sigma^2,
 \| s\|_\infty^{-2} \sigma^{-4}\right\}\, \frac{\log(N)}{N}.
\end{equation}
\end{enumerate}
\end{prop}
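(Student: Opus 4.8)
The plan is to dispose of the drift inequalities, the existence of the stationary distributions and the $V$-uniform ergodicity by elementary computations together with the standard drift and minorization arguments for Langevin-type chains of \cite{RoTw96b}, and to reserve the real work for the perturbation bound \eqref{NLbound}. For the drift, set $a = 1-\sigma^2/(2\sigma_p^2)$ and use the identity $\nabla\log\pi_y(\theta) = s(y) - \E_{\ell(\cdot\mid\theta)}s(Y) - \theta/\sigma_p^2$ established before the proposition, so that one step of the exact chain reads $X_n = a X_{n-1} + b(X_{n-1}) + Z_n$, where $b(\theta) = \tfrac{\sigma^2}{2}\bigl(s(y)-\E_{\ell(\cdot\mid\theta)}s(Y)\bigr)$ satisfies $|b(\theta)|\le\sigma^2\|s\|_\infty$ and $Z_n\sim\mathcal N(0,\sigma^2)$, whence $\E|Z_n|\le\sigma$. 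Since $\sigma^2<4\sigma_p^2$ forces $|a|<1$, the triangle inequality gives
\[
  P_\sigma V(\theta) = 1 + \E\,\bigl|a\theta + b(\theta) + Z_1\bigr| \le |a|\,|\theta| + 1 + \sigma^2\|s\|_\infty + \sigma,
\]
which, with $\delta$, $L$ and $I$ as in the statement, is $\le\delta\,V(\theta)$ whenever $\theta\notin I$ and is absorbed by $L\,\mathbf{1}_I(\theta)$ on $I$; since $\bigl|\tfrac1N\sum_{i=1}^N s(Y_i)\bigr|\le\|s\|_\infty$, the same bound holds for $P_{\sigma,N}$. I would then note that both transition kernels have a strictly positive Lebesgue density on $\R$ (coming from the Gaussian increment), hence are $\phi$-irreducible and aperiodic and every compact set---in particular $I$---is small; combined with the geometric drift this yields positive Harris recurrence, hence the existence of $\pi_\sigma$ and $\pi_{\sigma,N}$, and, by the argument of \cite{RoTw96b}, $V$-uniform ergodicity of both chains.

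For the perturbation estimate the decisive point is that $P_\sigma(\theta,\cdot)$ and $P_{\sigma,N}(\theta,\cdot)$ are both convolutions of a $\theta$-dependent ``signal'' with the \emph{same} Gaussian noise $\mathcal N(0,\sigma^2)$. Writing $m(\theta) = \theta+\tfrac{\sigma^2}{2}\nabla\log\pi_y(\theta) = a\theta + b(\theta)$, we have $P_\sigma(\theta,\cdot)=\mathcal N(m(\theta),\sigma^2)$, while $P_{\sigma,N}(\theta,\cdot)$ is the law of $m(\theta)+W+Z$, with $Z\sim\mathcal N(0,\sigma^2)$ independent of
\[
  W = \frac{\sigma^2}{2}\Bigl(\E_{\ell(\cdot\mid\theta)}s(Y) - \frac1N\sum_{i=1}^N s(Y_i)\Bigr),\qquad Y_1,\dots,Y_N\ \text{i.i.d.}\ \sim\ell(\cdot\mid\theta).
\]
By translation invariance of total variation, $\norm{P_\sigma(\theta,\cdot)-P_{\sigma,N}(\theta,\cdot)}_{\rm tv}$ equals the total variation distance between the laws of $Z$ and $Z+W$, so it depends on $\theta$ only through the distribution of $W$. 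The variable $W$ is centered, bounded by $\sigma^2\|s\|_\infty$, has $\E W^2\le\sigma^4\|s\|_\infty^2/(4N)$, and, because $s(Y_i)\in[-\|s\|_\infty,\|s\|_\infty]$, satisfies the Hoeffding tail bound $\P(|W|>t)\le 2\exp\!\bigl(-2Nt^2/(\sigma^4\|s\|_\infty^2)\bigr)$ for every $t>0$.

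The core of the argument is to exploit that $W$ is centered. Expanding the $\mathcal N(0,\sigma^2)$ density $\varphi_\sigma$ to second order, $\varphi_\sigma(x)-\varphi_\sigma(x-w)=w\,\varphi_\sigma'(x)+O\!\bigl(w^2\sup|\varphi_\sigma''|\bigr)$, and integrating against the law of $W$, the first-order term vanishes because $\E W=0$; this is what replaces the crude bound $\norm{\mathcal N(0,\sigma^2)-\mathcal N(w,\sigma^2)}_{\rm tv}\le|w|/\sigma$ together with $\E|W|=O(N^{-1/2})$---which would give only the rate $N^{-1/2}$---by one of essentially order $N^{-1}$. I would then truncate $W$ at a level $t\sim\sigma^2\|s\|_\infty\sqrt{\log N/N}$, control the tail through the Hoeffding bound and the bulk through the second-order expansion, and balance the two contributions; this produces \eqref{NLbound}, the logarithmic factor, the explicit constants and the admissible range $N>4\max\{\|s\|_\infty^2\sigma^4,\|s\|_\infty^{-3}\sigma^{-6}\}$ being those of \cite[Lemma~3]{AFEB14}. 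I expect this last step to be the main obstacle: obtaining the sharp $\log N/N$ rate rather than the naive $N^{-1/2}$ forces genuine use of the mean-zero structure of the Monte Carlo gradient error, and recovering the stated constants requires careful tracking of the dependence of $\int_\R|\varphi_\sigma''|$ and of $\E W^2$ on $\sigma$ and $\|s\|_\infty$.
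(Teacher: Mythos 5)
Your treatment of parts 1--3 coincides with the paper's: the same triangle-inequality drift computation (with the additive constant absorbed into $L\,\mathbf{1}_I(\theta)$ by checking that $\frac{\sigma^2}{4\sigma_p^2}V(\theta)\ge L$ exactly off $I$), followed by Lebesgue-irreducibility and the weak Feller property, petiteness of compact sets, and the standard Meyn--Tweedie theorems for existence of the stationary distributions and $V$-uniform ergodicity. No issues there.

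For part 4, however, there is a gap. The paper does not reprove the total variation estimate: it quotes \cite[Lemma~3]{AFEB14}, which states that for $N>4\|s\|_\infty^2\sigma^4$,
\[
\sup_{\theta\in\R}\norm{P_\sigma(\theta,\cdot)-P_{\sigma,N}(\theta,\cdot)}_{\rm tv}\le \exp\Bigl(\tfrac{\log N}{4N\|s\|_\infty^2\sigma^4}\Bigr)-1+\tfrac{4\sqrt{\pi}\,\|s\|_\infty\sigma^2}{N},
\]
and the entire content of the paper's proof of \eqref{NLbound} is the elementary reduction of this expression to $6\max\{\|s\|_\infty\sigma^2,\|s\|_\infty^{-2}\sigma^{-4}\}\log(N)/N$ via $e^x-1\le x e^x$ and $N>4$; the second branch $\|s\|_\infty^{-3}\sigma^{-6}$ in the hypothesis on $N$ is introduced precisely in that reduction, to force $\exp(\log N/(4N\|s\|_\infty^2\sigma^4))\le e$. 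Your claim that ``the explicit constants and the admissible range [are] those of \cite[Lemma~3]{AFEB14}'' is therefore a misreading: that lemma requires only $N>4\|s\|_\infty^2\sigma^4$ and contains neither the constant $6$ nor the maximum. Your proposed from-scratch derivation --- translation invariance, a second-order expansion of $\varphi_\sigma$ exploiting $\E W=0$, and a Hoeffding truncation --- is the correct heuristic for why the rate is $\log(N)/N$ rather than $N^{-1/2}$, but it is exactly the step you yourself flag as the main obstacle, and you do not carry it out; nothing in the sketch produces the specific constant $6\max\{\|s\|_\infty\sigma^2,\|s\|_\infty^{-2}\sigma^{-4}\}$ or explains why the condition on $N$ takes the stated form. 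To close the argument you should either cite \cite[Lemma~3]{AFEB14} and supply the short but necessary arithmetic reduction above, or complete your truncation argument with fully explicit constants.
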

\begin{proof}
  We use the same arguments as in \cite[Section~3.1]{RoTw96b}.
  One can easily see that the Markov chains $(X_n)_{n\in \N_0}$ and $(\wt X_n)_{n\in \N_0}$ are irreducible with respect to 
  the Lebesgue measure and weak Feller.
  Thus, all compact sets are petite, see \cite[Proposition~6.2.8]{MeTw09}.
  Hence, for the existence of stationary distributions, say $\pi_\sigma$ and $\pi_{\sigma,N}$,
  \cite[Theorem~12.3.3]{MeTw09}, 
  as well as 
  for the $V$-uniform ergodicity \cite[Theorem~16.0.1]{MeTw09} 
  it is enough to show that $V$ satisfies \eqref{eq: drift_Langevin_alg}.
With $Z \sim \mathcal{N}(0,\sigma^2)$, we have 
\begin{align*}
   P_\sigma V(\theta) 
& \leq \left(1-\frac{\sigma^2}{2 \sigma_p^2}  \right)V(\theta) + \frac{\sigma^2}{2\sigma^2_p}
    + \frac{\sigma^2}{2}\abs{s(y) - \mathbb{E}_{\ell(\cdot\mid \theta)}s(Y)} 
    + \mathbb{E}\abs{Z}\\
  & \leq  \left(1-\frac{\sigma^2}{2 \sigma_p^2}  \right) V(\theta) 
    + \frac{\sigma^2}{2\sigma_p^2}
    + \sigma^2 \norm{s}_\infty 
    + \sigma \\
 & \leq \left(1-\frac{\sigma^2}{2 \sigma_p^2}  \right) V(\theta) 
   + \max\left\{\frac{\sigma^2}{4 \sigma_p^2} V(\theta), \frac{\sigma^2}{2\sigma_p^2}
   +\sigma^2 \norm{s}_\infty+\sigma \right\} \\
 & \leq \left(1-\frac{\sigma^2}{4 \sigma_p^2}  \right) V(\theta) 
 +  \left( \frac{\sigma^2}{2\sigma_p^2}+\sigma^2\norm{s}_\infty+\sigma\right )
 \cdot \mathbf{1}_{I}(\theta).
 \end{align*}
By the fact that 
\[
\mathbb{E}\left[\abs{s(y) - \frac{1}{N} \sum_{i=1}^N s(Y_i)} \mid \wt X_n=\theta\right] \leq 2 \norm{s}_\infty
\]
we obtain with the same arguments that
\[
 P_{\sigma,N} V(\theta) \leq 
 \delta V(\theta) + L\cdot \mathbf{1}_I(\theta).
\]
Thus, the assertions from \ref{en: 3.}. to \ref{en: 1.}. are proven. 
The statement of
\ref{en: 4}. is a consequence of \cite[Lemma~3]{AFEB14}. 
There it is shown that for $N >4 \|s\|_\infty^2 \sigma^4$ it holds that 
\[
\sup_{\theta\in \R} \norm{P_{\sigma}(\theta,\cdot) - P_{\sigma,N}(\theta,\cdot)}_{\text{tv}} \leq \exp\left(\frac{\log(N)}{4 N \|s\|_\infty^2 \sigma^4} \right)-1 + \frac{4\sqrt{\pi} \|s\|_\infty \sigma^2}{N}.
\]
By using $\exp(\theta)-1 \leq \theta \exp(\theta)$ and $N>4$ we further estimate the 
right-hand side by
\[
\left(
\frac{K_{N,s,\sigma}}{4 \|s\|_\infty^2 \sigma^4} 
+ \frac{4\sqrt{\pi} \|s\|_\infty \sigma^2}{\log(5)} 
\right) \cdot\frac{\log{(N)}}{N}
\quad\text{ with }\quad 
K_{N,s,\sigma} = \exp\left(\frac{\log(N)}{4 N \|s\|_\infty^2 \sigma^4} \right).
\]
Since $ \log(N) \cdot N^{-1/3} < 2$, 
we have the bound  $K_{N,s,\sigma} \leq \exp(1)$ provided 
that $4 N^{2/3} \|s\|_\infty^2 \sigma^4 \geq 2$ 
which follows from $N \geq  \|s\|_\infty^{-3} \sigma^{-6}$. 
The assertion of \eqref{NLbound} follows now by a simple calculation.
\end{proof}

By using the facts collected in the previous proposition, 
we can apply the perturbation bound of Theorem~\ref{thm geom3} 
and obtain a quantitative perturbation bound for the noisy Langevin algorithm. 

\begin{cor}
 Let $p_0$ be a probability measure on $(\R,\mathcal{B}(\R))$ 
 and set $p_n=p_0 P_\sigma^n$ as well as $\wt p_{n,N} = p_0 P_{\sigma,N}^n$. Suppose that $\sigma^2 < 4 \sigma_p^2$.
 Then, there are numbers $\rho\in[0,1)$ and $C\in(0,\infty)$, independent of $n,N$,
 determining 
 \begin{equation*}
   R :=\frac{18 \max\{ \norm{s}_\infty \sigma^2,\norm{s}_\infty^{-2} \sigma^{-4} \}}{1-\rho} 
   \cdot \left( 2+\max\left\{ \mathbb{E}_{p_0}\abs{X},4 \sigma_p^2(\norm{s}_\infty+\sigma^{-1}) \right\} \right)
 \end{equation*}
 with $\mathbb{E}_{p_0}\abs{X} = \int_\R \abs{\theta} \,\dint p_0(\theta)$,
 so that for $N>90\max\{ \norm{s}_\infty^2 \sigma^4,\norm{s}_\infty^{-3} \sigma^{-6} \}$
 we have
 \begin{equation*}
     \max\left\{ \norm{p_n - \wt p_{n,N}}_{\text{\rm tv}},  
  \norm{ \pi_\sigma - \pi_{\sigma,N}}_{\text{\rm tv}} \right\}
  \leq R \cdot 
  \left(2C \left(\sigma + \sigma^2 \norm{s}_\infty +3  \right) \right)^{2/\log(N)} 
  \frac{\log(N)^2}{N}.
 \end{equation*}
\end{cor}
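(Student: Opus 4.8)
The plan is to apply Theorem~\ref{thm geom3} with $P=P_\sigma$, $\wt P=P_{\sigma,N}$ and Lyapunov function $\wt V=V$, $V(\theta)=1+\abs{\theta}$ (which is continuous, so the standing lower semi-continuity requirement is met), and then to read off the conclusion from the packaged form of that theorem given in Remark~\ref{rem: thm geom4}, which also absorbs the stationary-distribution estimate \eqref{eq: geom3_bound} of Remark~\ref{rem: thm geom3}.

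First I would assemble the hypotheses of Theorem~\ref{thm geom3} from Proposition~\ref{prop: Langevin}. Item~\ref{en: 1.} there gives that $P_\sigma$ is $V$-uniformly ergodic, which furnishes constants $\rho\in[0,1)$ and $C\in(0,\infty)$ depending only on $P_\sigma$, hence independent of $n$ and $N$, and these determine $R$. Item~\ref{en: 3.} gives $P_\sigma V\le\delta V+L\mathbf{1}_I\le\delta V+L\le V+L$ and $P_{\sigma,N}V\le\delta V+L\mathbf{1}_I\le\delta V+L$ with $\delta=1-\sigma^2/(4\sigma_p^2)\in(0,1)$ (using $\sigma^2<4\sigma_p^2$) and $L=\sigma+\sigma^2\norm{s}_\infty+\sigma^2/(2\sigma_p^2)$; these are exactly the two drift conditions in Theorem~\ref{thm geom3}. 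For the perturbation parameter, since $V\ge1$ and by item~\ref{en: 4},
\[
 \gamma=\sup_{\theta\in\R}\frac{\norm{P_\sigma(\theta,\cdot)-P_{\sigma,N}(\theta,\cdot)}_{\rm tv}}{V(\theta)}\le\sup_{\theta\in\R}\norm{P_\sigma(\theta,\cdot)-P_{\sigma,N}(\theta,\cdot)}_{\rm tv}\le K\,\frac{\log N}{N}
\]
whenever $N>4\max\{\norm{s}_\infty^2\sigma^4,\norm{s}_\infty^{-3}\sigma^{-6}\}$, where $K:=6\max\{\norm{s}_\infty\sigma^2,\norm{s}_\infty^{-2}\sigma^{-4}\}\ge6$.

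Next I would make the constants explicit. With $\wt p_0=p_0$ and $\wt V=V$ the quantity $\kappa$ of Theorem~\ref{thm geom3} equals $\max\{p_0(V),L/(1-\delta)\}$; since $p_0(V)=1+\mathbb{E}_{p_0}\abs{X}$ and $L/(1-\delta)=4\sigma_p^2 L/\sigma^2=2+4\sigma_p^2(\norm{s}_\infty+\sigma^{-1})$, this gives $\kappa\le2+\max\{\mathbb{E}_{p_0}\abs{X},4\sigma_p^2(\norm{s}_\infty+\sigma^{-1})\}$. Moreover $P_{\sigma,N}$ has a stationary distribution $\pi_{\sigma,N}$ (Proposition~\ref{prop: Langevin}), and $\norm{\pi_\sigma-\pi_{\sigma,N}}_V\le\pi_\sigma(V)+\pi_{\sigma,N}(V)\le2L/(1-\delta)<\infty$ by the two drifts and \cite[Proposition~4.24]{Ha06}. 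Hence Remark~\ref{rem: thm geom4} (with $p_0=\wt p_0$) applies and yields, for $N>6K^{3/2}$,
\[
 \max\{\norm{p_n-\wt p_{n,N}}_{\rm tv},\norm{\pi_\sigma-\pi_{\sigma,N}}_{\rm tv}\}\le\frac{3K\kappa}{1-\rho}\,\bigl(2C(L+1)\bigr)^{2/\log N}\,\frac{\log(N)^2}{N}.
\]
Here $\tfrac{3K\kappa}{1-\rho}\le R$ by the two preceding displays (note $3K=18\max\{\norm{s}_\infty\sigma^2,\norm{s}_\infty^{-2}\sigma^{-4}\}$), and $L+1=\sigma+\sigma^2\norm{s}_\infty+\sigma^2/(2\sigma_p^2)+1\le\sigma+\sigma^2\norm{s}_\infty+3$ since $\sigma^2/(2\sigma_p^2)\le2$; as $x\mapsto x^{2/\log N}$ is increasing, $\bigl(2C(L+1)\bigr)^{2/\log N}$ is then bounded by $\bigl(2C(\sigma+\sigma^2\norm{s}_\infty+3)\bigr)^{2/\log N}$, which is the claimed inequality.

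It remains to check that the single assumption $N>90\max\{\norm{s}_\infty^2\sigma^4,\norm{s}_\infty^{-3}\sigma^{-6}\}$ implies the three thresholds used above, namely $N>4\max\{\norm{s}_\infty^2\sigma^4,\norm{s}_\infty^{-3}\sigma^{-6}\}$ (for the $\gamma$-bound), $N>6K^{3/2}$ (for Remark~\ref{rem: thm geom4}), and $\gamma<\exp(-1)$ (hypothesis of Theorem~\ref{thm geom3}; the case $\gamma=0$ being trivial). The first is immediate as $90>4$. For the second, writing $a=\norm{s}_\infty\sigma^2$ gives $6K^{3/2}=36\sqrt6\,\max\{a^{3/2},a^{-3}\}$, and a short case distinction on $a\ge1$ versus $a<1$, using $90\ge36\sqrt6$, shows $90\max\{a^2,a^{-3}\}\ge36\sqrt6\,\max\{a^{3/2},a^{-3}\}$. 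The third follows since $N\mapsto\log N/N$ decreases on $[e,\infty)$, $6K^{3/2}\ge6\cdot6^{3/2}>e$, and $K\log(6K^{3/2})/(6K^{3/2})<\exp(-1)$ for all $K\ge6$. The main obstacle is not conceptual but exactly this bookkeeping: matching one clean condition on $N$ to the several auxiliary ones while tracking the dependence of $C,L,\delta,K,\kappa$ on $\norm{s}_\infty$, $\sigma$, $\sigma_p$; everything else is a direct invocation of Theorem~\ref{thm geom3} and Remark~\ref{rem: thm geom4}.
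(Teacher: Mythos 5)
Your proposal is correct and follows essentially the same route as the paper: it invokes Proposition~\ref{prop: Langevin} to verify the drift conditions, the $\wt V$-uniform ergodicity of $P_\sigma$ (yielding $\rho$, $C$ independent of $n,N$) and the total-variation bound $\gamma\le K\log(N)/N$, and then applies Theorem~\ref{thm geom3} via the packaged bound of Remark~\ref{rem: thm geom4} together with the stationary-distribution argument of Remark~\ref{rem: thm geom3}. Your explicit bookkeeping ($\kappa\le 2+\max\{\mathbb{E}_{p_0}|X|,4\sigma_p^2(\norm{s}_\infty+\sigma^{-1})\}$, $3K=18\max\{\norm{s}_\infty\sigma^2,\norm{s}_\infty^{-2}\sigma^{-4}\}$, $L+1\le\sigma+\sigma^2\norm{s}_\infty+3$, and the check that $N>90\max\{\norm{s}_\infty^2\sigma^4,\norm{s}_\infty^{-3}\sigma^{-6}\}$ covers the thresholds $N>4\max\{\cdot\}$, $N>6K^{3/2}$ and $\gamma<\exp(-1)$) just makes explicit what the paper leaves as "simple calculations."
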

\begin{proof} 
We have by Proposition~\ref{prop: Langevin} that $P_\sigma$ is $V$-uniformly ergodic with $V(\theta)=1+\abs{\theta}$, i.e., 
there are numbers $\rho\in [0,1)$ and $C\in (0,\infty)$ such that 
 \[
  \sup_{\theta\in \R} \frac{\norm{P_{\sigma}^n(\theta,\cdot)-\pi_\sigma}_{V}}{V(\theta)} \leq C \rho^n .
 \]
Now, by combining Theorem~\ref{thm geom3} and Remark~\ref{rem: thm geom4} with the results from Proposition~\ref{prop: Langevin}
we obtain the result.
\end{proof}

\begin{rem}
We want to point out that the assumptions imposed are the same 
as in \cite[Theorem~3.2]{AFEB14}, but 
instead of the asymptotic result 
we provide an explicit estimate.
The numbers $\rho \in [0,1)$
and $C\in (0,\infty)$ are not stated in terms of the model parameters. 
In principle, these values can be 
derived from the drift condition \eqref{eq: drift_Langevin_alg} 
through \cite[Theorem~1.1]{Ba05}.
\end{rem}

\section*{Acknowledgements}
We thank Alexander Mitrophanov 
and the referees for their valuable comments which helped to improve the paper. D.R. was
supported by the DFG Research Training Group 2088.

\bibliographystyle{imsart-number}

\end{document}